\newtheorem{thm}{Theorem}[section]
\newtheorem{cor}[thm]{Corollary}
\newtheorem{lem}[thm]{Lemma}
\theoremstyle{definition}
\newtheorem{defn}[thm]{Definition}
\newtheorem{ex}[thm]{Example}
\newtheorem{exs}[thm]{Examples}
\def\bcdot{\,\boldsymbol\cdot\,}
\def\beq#1#2\eeq{%
        \begin{equation}%
        \label{#1}%
            #2%
        \end{equation}%
    }
\begin{document}
\title[The Mumford Dynamical System\\ and Hyperelliptic Kleinian Functions]
{The Mumford Dynamical System and\\ Hyperelliptic Kleinian Functions}
\author{Victor Buchstaber}
\address{V.A. Steklov Mathematical Institute\\ Gubkin St. 8\\
Moscow, 119991, Russia} \email{buchstab@mi-ras.ru}

\maketitle

\rightline{\em In memory of Israel Moiseevich Gelfand (1913--2009)}

\bigskip


\begin{abstract}
We establish differential-algebraic theory of the Mumford dynamical system.
In the framework of this theory, we introduce the $(P,Q)$-recursion, which defines a sequence of functions $P_1,P_2,\ldots$
given the first function of this sequence $P_1$ and a sequence of parameters $h_1,h_2,\ldots$.
The general solution of the $(P,Q)$-recursion is shown to give a solution for the parametric graded Korteweg--de Vries hierarchy.
We prove that all solutions of the Mumford dynamical $g$-system are determined by the $(P,Q)$-recursion
under the condition $P_{g+1} = 0$, which is equivalent to an ordinary nonlinear differential equation of order $2g$ for the function $P_1$.
Reduction of the $g$-system of Mumford to the Buchstaber--Enolskii--Leykin dynamical system is described explicitly,
and its explicit $2g$-parameter solution in hyperelliptic Klein functions is presented.
\end{abstract}


\section*{Introduction}
The theory and applications of the Korteweg--de Vries equation, see \cite{KdV-1895}, and the KdV hierarchy
were at the center of Israel Moiseevich Gelfand's attention for many year (\cite{G-D-75}--\cite{G-D-79}).

Solutions of the KdV hierarchy, obtained through quantum field theory methods and algebraic-geometric methods of soliton theory, are widely known.
Constructing its solutions and studying their properties are the focus of the analytical direction of this hierarchy's theory.
The methods include various recurrence constructions for the KdV hierarchy based on the connection of the Korteweg--de Vries equation
with fundamental problems in mathematical physics, classical and quantum mechanics, and the theory of commuting operators.
The differential-algebraic direction explores the construction of KdV hierarchies, their structure, and connections
with other fundamental equations and hierarchies within the theory of integrable systems. Deep problems in differential algebra,
differential equations theory, differential geometry, symplectic and Poisson manifolds, algebraic geometry,
and the theory of Abelian functions have emerged at the intersection of these directions, as seen in papers, reviews,
and monographs such as \cite{AdMoer-94}, \cite{B-Mikh-21}, \cite{Dic-03}--\cite{DKN-85}, \cite{GGKM-67}, \cite{G-D-79}--\cite{N-74}, and \cite{Tsig-05}.
Gelfand's interest in almost all aspects of the KdV theory and his results greatly contributed to the modern state of this theory.
In this paper, within the framework of the differential-algebraic direction of the KdV theory, we present the construction
of $(P, Q)$-recursion based on the Mumford dynamical system. Using the results of \cite{B-23}, we describe connections
of this recursion to the Gelfand--Dikii recursion (see \cite{G-D-75}, \cite{G-D-79}).

Based on the theta-functional solutions of the $g$-stationary Korteweg--de Vries hierarchy,
see \cite{N-74}, Dubrovin and Novikov showed \cite{D-N-74} that the universal Jacobian space of hyperelliptic curves
of genus $g$ (the \emph{DN space}) is birationally equivalent to $\mathbb{C}^{3g+1}$.

Using the theory of non-singular hyperelliptic curves and the theta-functional theory of their Jacobians,
Mumford introduced a dynamical system on $\mathbb{C}^{3g+1}$ and obtained an algebraic-geometric description
of hyperelliptic Jacobians in the solution space of this system, which is equivalent to the DN space,
see \cite[Chapter ~3, \S3, Theorem 3.1]{Mumf-84}.

In the framework of the algebraic-geometric theory of integrable systems, Vanhaecke and several other authors
further developed the theory of the Mumford dynamical system and its generalizations, see \cite{Vanh-01}, \cite{Fit-22},
as well as references cited in those works.

Buchstaber, Enolskii, and Leykin introduce and describe \cite{BEL-97-2} on the space $\mathbb{C}^{3g+1}$:

1) A polynomial dynamical system BEL of the form
\[
\mathcal{D}_\eta L_\xi = [L_\xi, M_{\xi,\eta}], \qquad \mathcal{D}_\eta = \sum_{i=1}^g \eta^{g-i}\partial_i,\; \partial_i = \frac{\partial}{\partial t_i},
\]
where $L_\xi$ and $M_{\xi,\eta}$ are matrix functions on $\mathbb{C}^g$, $g\geqslant 1$, that take values
in the Lie algebra $\mathcal{S}L(2,\mathbb{C}[\xi,\eta])$,
$\frac{\partial}{\partial t_i}\eta\equiv 0$, and $t_i$, $i=1,\dots,g$, are the coordinates in $\mathbb{C}^g$.

2) A form of the function $M_{\xi,\eta}$ for which this system can be integrated in hyperelliptic Klein functions.

3) The corresponding effective algebraic-geometric description of hyperelliptic Jacobians.

Section 1 of the present paper presents a differential-algebraic theory of the Mumford dynamical system,
see \eqref{f-2}--\eqref{f-4}, which does not rely on the theory of hyperelliptic curves and Abelian functions on their Jacobians.
For each function $P_1$ and a sequence of parameters $h_1, h_2, \dots$, we construct the so-called $(P, Q)$-recursion,
defining a sequence of functions $P_1, P_2, \ldots$. The solutions of the Mumford dynamical $g$-system are proven
to be determined by this recursion under the condition $P_{g+1} = 0$, which is equivalent to an ordinary nonlinear
differential equation of order $2g$ for the function $P_1$.
Theorem \ref{T-B1}, proved in \cite{B-23}, shows that the $(P, Q)$-recursion with a general parameter vector
provides a solution to the parametric KdV hierarchy. Theorem \ref{T-B2}, also proved in \cite{B-23},
demonstrates that, the special sequence of functions $P_1^0, P_2^0, \dots$ in the case of all parameters
being equal to $0$ can be scaled to a standard solution of the Gelfand--Dikii recursion.
We introduce the concept of a special solution to the KdV hierarchy, and show that this solution is determined
by a single parameter.
We prove the general solution of the $(P, Q)$-recursion (as well as the Gelfand--Dikii recursion)
to be a linear combination of special solutions with coefficients that are independent parameters.
An explicit expression for these parameters in terms of integrals of the Mumford dynamical system is obtained.

The key result of the paper is construction of differential equations that correspond to the $g$-stationary KdV hierarchy
and the $g$-Novikov equation in terms of the $(P, Q)$-recursion. As a result, we obtain solutions to these equations
for singular hyperelliptic curves.

In Section 8 of this paper, we describe a reduction of the Mumford dynamical system to the BEL dynamical system.
It is worth noting that the solution of the BEL dynamical system is based on multi-dimensional heat conduction equations
in a nonholonomic frame, as seen in \cite{BL-04}.
The effective construction of these solutions uses results obtained jointly with Bunkova, see \cite{B-Bun-20} and \cite{B-Bun-23}.

\section{The Mumford Dynamical System} \label{p-1}

\subsection{The Mumford Dynamical System in the Lax Form}\text{}

Consider the space $\mathbb{C}^{g}$, where $g\geqslant 1$, with coordinates $\mathbf{t}=(t_1,\ldots,t_g)$.
We introduce differentiation $\mathcal{D}_\eta = \sum_{i=1}^g \eta^{g-i}\partial_i$, where $\partial_i = \partial/\partial t_i$.
For independent parameters $\xi$ and $\eta$, $\mathcal{D}_\eta\xi = \mathcal{D}_\xi\eta \equiv 0$. We consider mappings
\[
L_\xi \colon \mathbb{C}^{g}\to \mathcal{S}L(2,\mathbb{C}[\xi])\quad \text{and}\quad A_\eta \colon \mathbb{C}^{g}\to \mathcal{S}L(2,\mathbb{C}[\eta]).
\]
We study dynamical systems of the form
\begin{equation}\label{f-1}
\mathcal{D}_\eta L_\xi(\mathbf{t}) = \frac{1}{\xi-\eta}[L_\xi(\mathbf{t}),L_\eta(\mathbf{t})] + [L_\xi(\mathbf{t}),A_\eta(\mathbf{t})],
\end{equation}
where $L_\xi(\mathbf{t}) =
\begin{pmatrix}
  v_\xi(\mathbf{t}) & u_\xi(\mathbf{t}) \\
  w_\xi(\mathbf{t}) & -v_\xi(\mathbf{t})
\end{pmatrix}.$

We will omit the argument $\mathbf{t}$ when the dependence on it is clear from the context.
Consider the space $\mathbb{C}^{3g+1}$, where $g\geqslant 1$, with coordinates
$(\mathbf{u},\mathbf{v},\mathbf{w})$, where $\mathbf{u}=(u_1,\ldots,u_g)$, $\mathbf{v}=(v_1,\ldots,v_g)$,
and $\mathbf{w}=(w_1,\ldots,w_{g+1})$. The mapping $L_\xi$ is defined by equations
\begin{equation}\label{f-1-1}
u_\xi = \xi^g+\sum_{i=1}^g u_i\xi^{g-i},\quad v_\xi = \sum_{i=1}^g v_i\xi^{g-i},\quad w_\xi = \xi^{g+1}+\sum_{i=1}^{g+1} w_i\xi^{g+1-i}.
\end{equation}

\begin{lem}\label{L-1}
Suppose that the mapping $L_\xi$ given by Equation \eqref{f-1-1} satisfies the system \eqref{f-1}.
Then $A_\eta(\mathbf{t}) = \begin{pmatrix}
  0 & 0 \\
  u_\eta & 0
\end{pmatrix}.$
\end{lem}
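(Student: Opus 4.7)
The plan is to write the unknown matrix as $A_\eta = \begin{pmatrix} a_\eta & b_\eta \\ c_\eta & -a_\eta \end{pmatrix}$ (trace zero, since it takes values in traceless $2\times 2$ matrices) with polynomial entries in $\eta$, and then force $a_\eta$, $b_\eta$, $c_\eta$ by matching top-degree coefficients in $\xi$ in each entry of \eqref{f-1}. The driving observation is that $u_\xi$ and $w_\xi$ are monic in $\xi$ of degrees $g$ and $g+1$, so their leading $\xi$-coefficients are constants, hence killed by each $\partial_i$. Consequently $\mathcal{D}_\eta L_\xi$ satisfies the strict $\xi$-degree bounds $\deg_\xi \mathcal{D}_\eta u_\xi\le g-1$, $\deg_\xi \mathcal{D}_\eta v_\xi\le g-1$, $\deg_\xi \mathcal{D}_\eta w_\xi\le g$.

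A direct computation of the upper-row entries of the two commutators yields
\[
[L_\xi, L_\eta]_{11} = u_\xi w_\eta - u_\eta w_\xi,\qquad [L_\xi, L_\eta]_{12} = 2(u_\eta v_\xi - u_\xi v_\eta),
\]
\[
[L_\xi, A_\eta]_{11} = u_\xi c_\eta - w_\xi b_\eta,\qquad [L_\xi, A_\eta]_{12} = 2(v_\xi b_\eta - u_\xi a_\eta).
\]
I would then extract the three unknowns in sequence. From the $(1,2)$-entry of \eqref{f-1}: the quotient $\frac{2(u_\eta v_\xi - u_\xi v_\eta)}{\xi-\eta}$ has $\xi$-degree at most $g-1$ (numerator of $\xi$-degree $g$ divided by a degree-one factor), while $-2u_\xi a_\eta$ contributes $-2a_\eta$ at $\xi^g$; matching against the LHS (no $\xi^g$-term) forces $a_\eta \equiv 0$. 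From the $(1,1)$-entry: the quotient $\frac{u_\xi w_\eta - u_\eta w_\xi}{\xi-\eta}$ has $\xi$-degree exactly $g$ with leading $\xi$-coefficient $-u_\eta$, so the only $\xi^{g+1}$-contribution on the RHS comes from $-w_\xi b_\eta$, forcing $b_\eta \equiv 0$. Reading the $\xi^g$-coefficient of the same $(1,1)$-entry, the quotient contributes $-u_\eta$ and $u_\xi c_\eta$ contributes $c_\eta$, while the LHS vanishes at this order; hence $c_\eta = u_\eta$.

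The $(2,1)$- and $(2,2)$-entries of \eqref{f-1} impose no further constraint on $A_\eta$: the $(2,2)$-entry is the negative of the $(1,1)$-entry, and the $(2,1)$-entry, after substituting the values just found, reduces to a dynamical relation on $w_\xi$. The only point to watch is the divisibility step: for a polynomial $N(\xi)$ with $N(\eta)=0$, the quotient $N(\xi)/(\xi-\eta)$ has $\xi$-degree exactly $\deg_\xi N - 1$ with the same leading $\xi$-coefficient. No serious obstacle arises; the argument is really a degree-count together with a careful reading of leading coefficients.
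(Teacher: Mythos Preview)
The paper states this lemma without proof, so there is no paper argument to compare against. Your argument is correct and is the natural one: the commutator entries are as you compute, and the monicity of $u_\xi$ (degree $g$) and $w_\xi$ (degree $g+1$) together with $\deg_\xi v_\xi\le g-1$ forces the leading-$\xi$ coefficients on the right-hand side of \eqref{f-1} to vanish entrywise, yielding $a_\eta=0$ from the $(1,2)$-entry at order $\xi^g$, then $b_\eta=0$ from the $(1,1)$-entry at order $\xi^{g+1}$, and finally $c_\eta=u_\eta$ from the $(1,1)$-entry at order $\xi^g$. The divisibility remark you make (quotient by $\xi-\eta$ drops the $\xi$-degree by one while keeping the leading coefficient) is exactly what is needed. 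One minor point: even without invoking tracelessness of $A_\eta$ a priori, a scalar summand of $A_\eta$ commutes with $L_\xi$ and hence drops out of \eqref{f-1}, so your tracelessness assumption is harmless.
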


The system with such $A_\eta(\mathbf{t})$ we will call a Mumford system in Lax form.
It defines a dynamical system on $\mathbb{C}^{3g+1}$ that is quadratic in the variables $(\mathbf{u},\mathbf{v},\mathbf{w})$.

\subsection{The Dynamical System on $\mathbb{C}^{3g+1}$}\text{}

Let us write the Mumford system in an expanded form:
\begin{align}\label{f-2}
  \mathcal{D}_\eta u_\xi =& \frac{2}{\xi-\eta}(v_\xi u_\eta-u_\xi v_\eta); \\
\label{f-3} \mathcal{D}_\eta v_\xi =& \frac{1}{\xi-\eta}(u_\xi w_\eta - w_\xi u_\eta) + u_\xi u_\eta; \\
 \label{f-4} \mathcal{D}_\eta w_\xi =& \frac{2}{\xi-\eta}(w_\xi v_\eta - v_\xi w_\eta) - 2v_\xi u_\eta.
\end{align}

Directly from equations \eqref{f-2}--\eqref{f-4}, it follows
\begin{cor}\label{C-1}
In the conditions of Lemma \ref{L-1}, we obtain
\begin{itemize}
\item[1)] $\mathcal{D}_\eta u_\xi = \mathcal{D}_\xi u_\eta$;\vskip.3cm
\item[2)] $\mathcal{D}_\eta v_\xi = \mathcal{D}_\xi v_\eta$;\vskip.3cm
\item[3)] $\frac{\mathcal{D}_\xi w_\eta - \mathcal{D}_\eta w_\xi}{\xi-\eta} = \mathcal{D}_\eta u_\xi$.
\end{itemize}
\end{cor}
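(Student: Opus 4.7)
The plan is to read each of the three identities directly off the defining equations \eqref{f-2}--\eqref{f-4} by applying the swap $\xi \leftrightarrow \eta$ and exploiting that $u_\xi, v_\xi, w_\xi, u_\eta, v_\eta, w_\eta$ are ordinary (commuting) scalar-valued functions on $\mathbb{C}^g$. Nothing beyond algebra is needed; the content is that the right-hand sides of \eqref{f-2} and \eqref{f-3} are manifestly symmetric in $(\xi,\eta)$, while \eqref{f-4} is symmetric only up to the extra term $-2v_\xi u_\eta$, which is exactly what produces part 3).

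First, for part 1), I would apply \eqref{f-2} with $\xi$ and $\eta$ interchanged to get
\[
\mathcal{D}_\xi u_\eta = \tfrac{2}{\eta-\xi}(v_\eta u_\xi - u_\eta v_\xi) = \tfrac{2}{\xi-\eta}(v_\xi u_\eta - u_\xi v_\eta) = \mathcal{D}_\eta u_\xi.
\]
For part 2), the same operation on \eqref{f-3}, together with commutativity $u_\eta u_\xi = u_\xi u_\eta$, gives
\[
\mathcal{D}_\xi v_\eta = \tfrac{1}{\eta-\xi}(u_\eta w_\xi - w_\eta u_\xi) + u_\eta u_\xi = \tfrac{1}{\xi-\eta}(u_\xi w_\eta - w_\xi u_\eta) + u_\xi u_\eta = \mathcal{D}_\eta v_\xi.
\]

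For part 3), which is the only identity requiring a short computation rather than pure symmetry, I would swap $\xi\leftrightarrow\eta$ in \eqref{f-4} and subtract \eqref{f-4} itself. The two $\frac{2}{\xi-\eta}$-terms cancel because, in scalar commuting variables, $(v_\eta w_\xi - w_\eta v_\xi) - (w_\xi v_\eta - v_\xi w_\eta) = 0$. What remains is
\[
\mathcal{D}_\xi w_\eta - \mathcal{D}_\eta w_\xi = -2 v_\eta u_\xi + 2 v_\xi u_\eta = 2(v_\xi u_\eta - u_\xi v_\eta),
\]
and dividing by $\xi-\eta$ and comparing with \eqref{f-2} yields $(\mathcal{D}_\xi w_\eta - \mathcal{D}_\eta w_\xi)/(\xi-\eta) = \mathcal{D}_\eta u_\xi$.

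There is no real obstacle here; the only point worth flagging is that one must be careful to track the sign flip $\tfrac{1}{\eta-\xi}=-\tfrac{1}{\xi-\eta}$ when swapping the arguments, and to notice that in \eqref{f-4} the extra term $-2v_\xi u_\eta$ is what breaks the naive symmetry and produces the numerator $2(v_\xi u_\eta - u_\xi v_\eta)$ matching \eqref{f-2}. Observe also that Lemma \ref{L-1} is used only implicitly, through the fact that \eqref{f-2}--\eqref{f-4} are precisely the components of the Lax equation once $A_\eta$ has been fixed as in that lemma.
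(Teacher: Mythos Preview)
Your proof is correct and follows exactly the approach the paper indicates: the paper simply states that the corollary follows ``directly from equations \eqref{f-2}--\eqref{f-4}'', and your computation makes that direct verification explicit by swapping $\xi\leftrightarrow\eta$ and tracking the sign of $\frac{1}{\xi-\eta}$.
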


\section{Integrals of the Mumford Dynamical System} \label{p-2}
Set
\begin{equation*}\label{f-5}
H_\xi = -\det L_\xi = u_\xi w_\xi + v_\xi^2 = \xi^{2g+1} + \sum\limits_{n=1}^{2g+1}h_n \xi^{2g+1-n}.
\end{equation*}
Since the traces of the matrices in the right-hand side of Equation \eqref{f-1} are equal to zero,
then $\mathcal{D}_\eta H_\xi \equiv 0$ and
\begin{multline*}
H_\xi = \xi^{2g+1} + \sum\limits_{k=1}^{g}(u_k+w_k) \xi^{2g+1-k} + w_{g+1}\left( \xi^{g} + \sum\limits_{k=1}^{g}u_k\xi^{g-k} \right) + \\
      + \sum\limits_{m=2}\left( \sum\limits_{i+j=m}u_i w_j \right)\xi^{2g+1-m} + \sum\limits_{s=2}^{2g}\left( \sum\limits_{i+j=s}v_i v_j \right)\xi^{2g-s}.
\end{multline*}
Therefore,
\begin{align*}
\text{для } g\geqslant 1:\; & w_1 = h_1-u_1; \\
\text{для } g = 1:\; & w_2 = h_2-u_1w_1; \\
\text{для } g = 2:\; & w_2 = h_2-u_2-u_1w_1;\; w_3 = h_3-(u_1w_2+u_2w_1)-v_1^2; \\
\text{для } g > 2:\; & w_k = h_k-u_k-\sum\limits_{i+j=k}u_i w_j-\sum\limits_{i+j=k-1}v_i v_j,\; k=2,\ldots,g.
\end{align*}
Moreover, for $g > 2$,
\begin{equation*}
w_{g+1} = h_{g+1}-\sum\limits_{i+j=g+1}u_i w_j-\sum\limits_{i+j=g}v_i v_j.
\end{equation*}
\begin{cor}
The dynamical system \eqref{f-2}--\eqref{f-4} on $\mathbb{C}^{3g+1}$ with coordinates $(\mathbf{u},\mathbf{v},\mathbf{w})$
is equivalent to a family of compatible dynamical systems on $\mathbb{C}^{2g}$ with coordinates $(\mathbf{u},\mathbf{v})$ and parameters $\mathbf{h}^{g+1} = (h_1,\dots,h_{g+1})$.

For each fixed value of $\mathbf{h}^{g+1}\in\mathbb{C}^{g+1}$, the resulting dynamical system has $g$ algebraically independent polynomial integrals $h_{g+2},\ldots,h_{2g+1}$, where
\[
h_{g+k} = \sum\limits_{i+j=g+k}u_i w_j + \sum\limits_{i+j=g+k-1}v_i v_j,\; k=2,\ldots,g+1.
\]
\end{cor}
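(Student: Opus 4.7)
\emph{Step 1: The reduction $\mathbb{C}^{3g+1}\to\mathbb{C}^{2g}$ parametrised by $\mathbf{h}^{g+1}$.}
The displayed recurrences immediately preceding the corollary express $w_1,w_2,\dots,w_{g+1}$ triangularly in terms of $(\mathbf{u},\mathbf{v})$ and $(h_1,\dots,h_{g+1})$: indeed, from $H_\xi=u_\xi w_\xi+v_\xi^2$ with $u_\xi$ monic, the equation matching the coefficient of $\xi^{2g+1-n}$ for $n=1,\dots,g+1$ has the form
$w_n=h_n-(\text{polynomial in }u_1,\dots,u_n,\,v_1,\dots,v_{n-2},\,w_1,\dots,w_{n-1})$.
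Hence the assignment
$(\mathbf{u},\mathbf{v},\mathbf{w})\longmapsto(\mathbf{u},\mathbf{v},h_1,\dots,h_{g+1})$
is a polynomial isomorphism $\mathbb{C}^{3g+1}\xrightarrow{\sim}\mathbb{C}^{2g}\times\mathbb{C}^{g+1}$ with polynomial inverse. Substituting the expressions for $w_j$ into \eqref{f-2}--\eqref{f-3} yields, for every fixed $\mathbf{h}^{g+1}\in\mathbb{C}^{g+1}$, a quadratic polynomial dynamical system on $\mathbb{C}^{2g}$ in $(\mathbf{u},\mathbf{v})$; equation \eqref{f-4} is then automatic, because differentiating the identity $w_j=w_j(\mathbf{u},\mathbf{v};\mathbf{h}^{g+1})$ along any $\partial_i$ reproduces \eqref{f-4} by the chain rule together with $\partial_i h_k\equiv 0$.

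\emph{Step 2: Compatibility of the family.}
The flows $\partial_1,\dots,\partial_g$ on $\mathbb{C}^{3g+1}$ are the images, under the dynamics, of the coordinate derivations on $\mathbb{C}^g$, so they pairwise commute tautologically; equivalently, part 1) of Corollary \ref{C-1} together with the analogous symmetries guarantee that the Lax flows $\mathcal{D}_\eta$ and $\mathcal{D}_\xi$ commute. Since each flow preserves every $h_k$, it descends to $\mathbb{C}^{2g}$ for any fixed $\mathbf{h}^{g+1}$, and the pairwise commutativity is preserved by the reduction. This is exactly the asserted compatibility.

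\emph{Step 3: $h_{g+2},\dots,h_{2g+1}$ are integrals.}
Because $L_\xi,L_\eta,A_\eta$ are traceless in \eqref{f-1}, the right-hand side is a commutator of traceless matrices; hence $\mathcal{D}_\eta\det L_\xi\equiv 0$, so $\mathcal{D}_\eta H_\xi\equiv 0$. Equating coefficients of powers of $\eta$ and $\xi$ gives $\partial_i h_n\equiv 0$ for all $i=1,\dots,g$ and all $n=1,\dots,2g+1$. The coordinates $h_1,\dots,h_{g+1}$ have become parameters by Step 1, while $h_{g+2},\dots,h_{2g+1}$, expressed through Step 1 as polynomials in $(\mathbf{u},\mathbf{v})$ with coefficients in $\mathbb{C}[\mathbf{h}^{g+1}]$, are polynomial first integrals of each reduced flow.

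\emph{Step 4: Algebraic independence of $h_{g+2},\dots,h_{2g+1}$.} This is the main technical step. My plan is to exhibit a $g\times g$ minor of the Jacobian matrix $\bigl(\partial h_{g+k}/\partial v_\ell\bigr)_{k=2,\dots,g+1;\;\ell=1,\dots,g}$ that is generically nonzero. Differentiating $h_{g+k}=\sum_{i+j=g+k}u_iw_j+\sum_{i+j=g+k-1}v_iv_j$ and using that, by Step 1, $w_j$ depends on $v_1,\dots,v_{j-2}$ only quadratically through the sums $\sum v_pv_q$, one checks that the $v$-highest-degree part of $h_{g+k}$ is $2v_{k-1}v_g+(\text{lower in }v_g)$ when written with $u$'s held fixed. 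The resulting Jacobian has an upper-triangular leading part (with respect to the ordering $v_g,v_{g-1},\dots,v_1$) whose diagonal entries are nonzero polynomials in the remaining $v_\ell$'s; its determinant is therefore a nonzero polynomial, proving algebraic independence over $\mathbb{C}$ for any fixed $\mathbf{h}^{g+1}$. Making this leading-term bookkeeping precise is the only nontrivial calculation; everything else is a direct substitution.
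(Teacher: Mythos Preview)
Your Steps 1--3 are correct and are exactly the argument the paper has in mind: the corollary is stated without a separate proof, as an immediate consequence of the displayed recurrences for $w_1,\dots,w_{g+1}$ and of $\mathcal{D}_\eta H_\xi\equiv 0$. So on those points you and the paper agree.

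Step 4 goes beyond what the paper writes (the paper simply asserts algebraic independence), and your Jacobian idea is the right one, but the specific triangularity claim is not correct as stated. For general $\mathbf{u}$ the $w_j$-contributions spoil the triangular shape: already for $g=2$ one computes
\[
\frac{\partial(h_4,h_5)}{\partial(v_1,v_2)}=
\begin{pmatrix}
-2u_1v_1+2v_2 & 2v_1\\[2pt]
-2u_2v_1 & 2v_2
\end{pmatrix},
\]
which is not triangular in any ordering of the columns. What saves the argument is a specialization you did not make: set $\mathbf{u}=0$. Since every $w_j$ enters $h_{g+k}$ only through the sum $\sum u_iw_j$, at $\mathbf{u}=0$ one has $h_{g+k}=\sum_{i+j=g+k-1}v_iv_j$ \emph{for every value of} $\mathbf{h}^{g+1}$, and then
\[
\frac{\partial h_{g+k}}{\partial v_\ell}=
\begin{cases}
2v_{g+k-1-\ell}, & \ell\geqslant k-1,\\
0, & \ell<k-1,
\end{cases}
\]
so the Jacobian is genuinely triangular with determinant $(2v_g)^g$. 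This gives algebraic independence of $h_{g+2},\dots,h_{2g+1}$ on $\mathbb{C}^{2g}$ for every fixed $\mathbf{h}^{g+1}$, which is precisely what the corollary asserts. Replace the vague ``upper-triangular leading part'' sentence by this specialization and Step~4 goes through cleanly.
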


\begin{ex}
Let $g=1$. Then $u_\xi = \xi+u_1,\; v_\xi = v_1,\; w_\xi = \xi^2+w_1\xi+w_2$\, and\, $\mathcal{D}_\eta = \partial_1$.
Set $\partial_1 f = f',\; u=u_1$\; and\; $v=v_1$. We obtain a dynamical system on $\mathbb{C}^{4}$ described by equations
\begin{align*}
  u' =& \; -2v;\qquad \qquad \qquad w_1' = 2v;      \\
  v' =& \; u^2-uw_1+w_2; \qquad\, w_2' = -2(u-w_1)v
\end{align*}
with integrals $h_1=u+w_1,\; h_2=u_1w_1+w_2,\; h_3=uw_2+v^2$.

On $\mathbb{C}^{4}$ with coordinates $u$, $v$, $h_1$, $h_2$, we get a family of dynamical systems on $\mathbb{C}^{2}$ described by equations
\begin{equation}\label{f-12}
u'= -2v, \qquad v' = 3u^2-2h_1u_1+h_2
\end{equation}
with an integral $h_3 = u^3-h_1u^2+h_2u+v^2$.

Note that the polynomial $h_3 = h_3(u,v,h_1,h_2)$ is a Hamiltonian for the system \eqref{f-12}
with respect to the Poisson bracket $\{u,v\} = 1$, where $h_1$ and $h_2$ are Casimir functions.
\end{ex}

\section{Poisson Brackets on $\mathbb{C}^{3g+1}$} \label{p-3}

\subsection{Special Poisson Brackets.}\text{}

This section is not formally related to the Mumford dynamical system but it is important for proving its integrability.

\begin{lem}\label{L-sP}
Let $\{\bcdot,\bcdot\}$ be a Poisson bracket defined in $\mathbb{C}^{3g+1}$ with coordinates $(\mathbf{u},\mathbf{v},\mathbf{w})$ such that
\[
\{u_\xi,u_\eta\} = \{v_\xi,v_\eta\} = 0,\quad \{u_\xi,v_\eta\} = \{u_\eta,v_\xi\},\quad \{u_\xi,w_\eta\} = \{u_\eta,w_\xi\}.
\]
Then, for the polynomial $H_\xi = v_\xi^2+u_\xi w_\xi$, the identity $\{H_\xi,H_\eta\} \equiv 0$ holds if and only if
\begin{equation*}
u_\xi u_\eta \{w_\xi,w_\eta\} = 2(u_\xi v_\eta \{v_\eta,w_\xi\} - u_\eta v_\xi \{v_\xi,w_\eta\}).
\end{equation*}
\end{lem}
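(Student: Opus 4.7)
The plan is a direct Leibniz-rule computation followed by collection of terms via the three hypotheses. First I decompose
\[
\{H_\xi,H_\eta\}=\{u_\xi w_\xi,u_\eta w_\eta\}+\{u_\xi w_\xi,v_\eta^2\}+\{v_\xi^2,u_\eta w_\eta\}+\{v_\xi^2,v_\eta^2\},
\]
and note that the last summand equals $4v_\xi v_\eta\{v_\xi,v_\eta\}=0$ by hypothesis. Each of the three remaining summands I expand by applying Leibniz to each factor, producing a sum of monomials in $u_\xi,v_\xi,w_\xi,u_\eta,v_\eta,w_\eta$ weighted by elementary brackets from the families $\{u_\xi,u_\eta\}$, $\{u_\xi,v_\eta\}$, $\{u_\xi,w_\eta\}$, $\{v_\xi,w_\eta\}$, $\{w_\xi,w_\eta\}$ together with their $\xi\leftrightarrow\eta$ transposes.

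Then I invoke the three hypotheses. The $\{u_\xi,u_\eta\}$-contributions drop. The pair of terms containing $\{u_\xi,v_\eta\}$ and $\{v_\xi,u_\eta\}$, the latter arising from $\{v_\xi^2,u_\eta w_\eta\}$, is collapsed by rewriting $\{v_\xi,u_\eta\}=-\{u_\eta,v_\xi\}$ via antisymmetry and then applying the symmetry $\{u_\xi,v_\eta\}=\{u_\eta,v_\xi\}$; the analogous move using $\{u_\xi,w_\eta\}=\{u_\eta,w_\xi\}$ eliminates the corresponding $\{u,w\}$-cross terms from $\{u_\xi w_\xi,u_\eta w_\eta\}$. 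A final use of antisymmetry to replace $\{w_\xi,v_\eta\}$ by $-\{v_\eta,w_\xi\}$ and its $\xi\leftrightarrow\eta$ twin then leaves
\[
\{H_\xi,H_\eta\}=u_\xi u_\eta\{w_\xi,w_\eta\}-2\bigl[u_\xi v_\eta\{v_\eta,w_\xi\}-u_\eta v_\xi\{v_\xi,w_\eta\}\bigr].
\]

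Since $\xi$ and $\eta$ enter as independent formal parameters, this polynomial identity yields the asserted equivalence at once: the left side vanishes identically in $\xi,\eta$ if and only if $u_\xi u_\eta\{w_\xi,w_\eta\}=2(u_\xi v_\eta\{v_\eta,w_\xi\}-u_\eta v_\xi\{v_\xi,w_\eta\})$.

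The main obstacle is sign bookkeeping rather than any conceptual difficulty. One must move methodically between $\{a,b\}$ and $-\{b,a\}$ before the symmetry hypotheses visibly apply, and must verify that the Leibniz prefactors of $2$, for example in $\{u_\xi w_\xi,v_\eta^2\}=2v_\eta\{u_\xi w_\xi,v_\eta\}$, combine correctly with antisymmetry to deliver the overall coefficient $-2$ in the final formula.
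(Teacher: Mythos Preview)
Your argument is essentially identical to the paper's: you decompose $\{H_\xi,H_\eta\}$ into the same three nontrivial blocks $\{v_\xi^2,u_\eta w_\eta\}$, $\{u_\xi w_\xi,v_\eta^2\}$, $\{u_\xi w_\xi,u_\eta w_\eta\}$, expand each by Leibniz, and then invoke antisymmetry together with the symmetry hypotheses to reduce to the displayed identity. The paper's proof is equally terse at the ``collapsing'' step, so your level of detail matches theirs; if anything, your explicit remark about passing through $\{a,b\}=-\{b,a\}$ before the symmetry hypotheses become visible is a helpful addition.
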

\begin{proof}
Let $\{\bcdot,\bcdot\}$ be a Poisson bracket in $\mathbb{C}^{3g+1}$ that satisfies the conditions of the lemma. Then,
\[
\{H_\xi,H_\eta\} = \{v_\xi^2,u_\eta w_\eta\} + \{u_\xi w_\xi,v_\eta^2\} + \{u_\xi w_\xi,u_\eta w_\eta\}.
\]
For the first term, we have
\[
\{v_\xi^2,u_\eta w_\eta\} = \{v_\xi^2,u_\eta\} w_\eta + u_\eta \{v_\xi^2,w_\eta\} = -2 v_\xi w_\eta \{u_\eta,v_\xi\} -2 u_\eta v_\xi \{w_\eta,v_\xi\}.
\]
Using the conditions on the Poisson bracket $\{\bcdot,\bcdot\}$, we obtain
\[
\{v_\xi^2,u_\eta w_\eta\} + \{u_\xi w_\xi,v_\eta^2\} = 2 u_\eta v_\xi \{v_\xi,w_\eta\} -2 u_\xi v_\eta \{v_\eta,w_\xi\}.
\]
For the third term,
\begin{multline*}
\{u_\xi w_\xi,u_\eta w_\eta\} = \{u_\xi w_\xi,u_\eta\} w_\eta + u_\eta \{u_\xi w_\xi,w_\eta\} = \\
= -u_\xi w_\eta \{u_\eta,w_\xi\} - u_\eta w_\xi \{w_\eta,u_\xi\} - u_\eta u_\xi \{w_\eta,w_\xi\}.
\end{multline*}
Using the conditions on the Poisson bracket $\{\bcdot,\bcdot\}$, we obtain
\[
\{u_\xi w_\xi,u_\eta w_\eta\} =  u_\xi u_\eta \{w_\xi,w_\eta\}.
\]
\end{proof}

\subsection{The Vanhaecke brackets.}\text{}

In monograph \cite{Vanh-01} Vanhaecke introduced a $(g+1)$-dimensional family of compatible Poisson brackets in $\mathbb{C}^{3g+1}$ that satisfy the conditions of Lemma \ref{L-sP}.
This family forms a $(g+1)$-dimensional space with a basis $\{\bcdot,\bcdot\}_k$, $k=0,1,\dots,g$, where
\begin{align*}
\{u_\xi,u_\eta\}_k =&\; \{v_\xi,v_\eta\}_k = 0, \\
\{u_\xi,v_\eta\}_k =&\; \frac{1}{\xi-\eta}(u_\xi\eta^k-u_\eta\xi^k), \\
\{u_\xi,w_\eta\}_k =&\; -\frac{2}{\xi-\eta}(v_\xi\eta^k-v_\eta\xi^k), \\
\{v_\xi,w_\eta\}_k =\; & \frac{1}{\xi-\eta}(w_\xi\eta^k-w_\eta\xi^k) - u_\xi\eta^k,\\
\{w_\xi,w_\eta\}_k =&\; 2(v_\xi\eta^k-v_\eta\xi^k).
\end{align*}

\begin{exs}
1) $\underline{g=1}:\; u_\xi = \xi+u,\; v_\eta = v; \quad \{u_\xi,v_\eta\}_k = \{u,v\}_k,\; k=0,1$:
\[
\{u,v\}_0 = 1,\quad \{u,v\}_1 = -u.
\]
2) $\underline{g=2}:\;  u_\xi = \xi^2+u_1\xi+u_2,\; v_\eta = v_1\eta+v_2$; \\
$\{u_\xi,v_\eta\}_k = \{u_1,v_1\}_k \xi\eta + \{u_1,v_2\}_k \xi + \{u_2,v_1\}_k\eta + \{u_2,v_2\}_k,\; k=0,1,2$,\\
$\{u_\xi,v_\eta\}_0 = \xi+\eta+u_1$:
\[
\{u_1,v_1\}_0 = 0,\quad \{u_1,v_2\}_0 = \{u_2,v_1\}_0 = 1,\quad \{u_2,v_2\}_0 = u_1.
\]
$\{u_\xi,v_\eta\}_1 = \xi\eta-u_2$:
\[
\{u_1,v_1\}_1 = 1,\quad \{u_1,v_2\}_1 = \{u_2,v_1\}_1 = 0,\quad \{u_2,v_2\}_1 = -u_2.
\]
$\{u_\xi,v_\eta\}_2 = -u_1\xi\eta-u_2(\xi+\eta)$:
\[
\{u_1,v_1\}_2 = -u_1,\quad \{u_1,v_2\}_2 = \{u_2,v_1\}_2 = -u_2,\quad \{u_2,v_2\}_2 = 0.
\]
\end{exs}

\subsection{Hamiltonians of the Mumford System and Casimirs of Brackets $\{\bcdot,\bcdot\}_k$.}\text{}

The coefficients $h_1,\dots,h_{2g+1}$ of the polynomial $H_\xi$ of degree $2g+1$ in $\xi$ are integrals of the Mumford dynamical system.
These coefficients are quadratic polynomials in $\mathbf{u},\mathbf{v},\mathbf{w}$; they are in involution with respect to the Poisson brackets $\{\bcdot,\bcdot\}_k$ for any $k=0,1,\dots,g$.

Set $\mathcal{D}_\eta(k) = -\eta^k \mathcal{D}_\eta$.
\begin{thm}
Using the bracket $\{\bcdot,\bcdot\}_k$, one can rewrite the Mumford dynamical system in Hamiltonian form for any $k=0,1,\ldots,g+1$ by setting
\begin{equation*}
\mathcal{D}_\eta(k) u_\xi = \{u_\xi,H_\eta\}_k, \quad \mathcal{D}_\eta(k) v_\xi = \{v_\xi,H_\eta\}_k, \quad \mathcal{D}_\eta(k) w_\xi = \{w_\xi,H_\eta\}_k.
\end{equation*}
\end{thm}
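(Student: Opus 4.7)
The approach is a direct computation: expand each Poisson bracket $\{u_\xi,H_\eta\}_k$, $\{v_\xi,H_\eta\}_k$, $\{w_\xi,H_\eta\}_k$ via the Leibniz rule using $H_\eta = u_\eta w_\eta + v_\eta^2$, substitute the Vanhaecke bracket formulas listed in Subsection~3.2, collect terms by powers of $\xi^k$ and $\eta^k$, observe that every $\xi^k$-contribution cancels, and recognise the surviving $\eta^k$-contribution as $-\eta^k$ times the right-hand side of \eqref{f-2}, \eqref{f-3}, \eqref{f-4} respectively. The verification therefore reduces to three independent bookkeeping computations.

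For $\{u_\xi,H_\eta\}_k$, the identity $\{u_\xi,u_\eta\}_k=0$ collapses the Leibniz expansion to $u_\eta\{u_\xi,w_\eta\}_k + 2v_\eta\{u_\xi,v_\eta\}_k$; after substitution the $\xi^k$-piece reads $2u_\eta v_\eta - 2v_\eta u_\eta = 0$, and the $\eta^k$-piece assembles into $\frac{2\eta^k}{\xi-\eta}(u_\xi v_\eta - u_\eta v_\xi) = -\eta^k\mathcal{D}_\eta u_\xi$. For $\{v_\xi,H_\eta\}_k$, the identity $\{v_\xi,v_\eta\}_k=0$ leaves $\{v_\xi,u_\eta\}_k w_\eta + u_\eta\{v_\xi,w_\eta\}_k$; the $\xi^k$-parts of the two rational contributions form $u_\eta w_\eta - u_\eta w_\eta = 0$, while the non-rational polynomial term $-u_\xi u_\eta\eta^k$ from $\{v_\xi,w_\eta\}_k$ combines with the surviving $\eta^k$-part to give $-\eta^k\bigl[\tfrac{1}{\xi-\eta}(u_\xi w_\eta - w_\xi u_\eta) + u_\xi u_\eta\bigr] = -\eta^k \mathcal{D}_\eta v_\xi$. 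For $\{w_\xi,H_\eta\}_k$ all three Leibniz pieces contribute: the $\xi^k$-parts from $\{w_\xi,u_\eta\}_k w_\eta$ and from the rational half of $2v_\eta\{w_\xi,v_\eta\}_k$ cancel, while the polynomial term $2u_\eta v_\eta\xi^k$ arising from the $-u_\eta\xi^k$ inside $\{v_\eta,w_\xi\}_k$ cancels against the $-2u_\eta v_\eta\xi^k$ emerging from $u_\eta\{w_\xi,w_\eta\}_k$, leaving exactly $-\eta^k\mathcal{D}_\eta w_\xi$.

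There is no substantive obstacle: each identity is a structural consequence of the $\xi\leftrightarrow\eta$ (anti)symmetries built into the Vanhaecke formulas, which force the $\xi^k$ terms to drop whenever one contracts against the quadratic Hamiltonian $H_\eta$. The only point requiring attention is the scalar factor $-\eta^k$; it is exactly dictated by the pattern $\eta^k$-at-the-first-slot versus $\xi^k$-at-the-second-slot in the bracket formulas, and it yields precisely the operator $\mathcal{D}_\eta(k) = -\eta^k\mathcal{D}_\eta$. The argument is uniform in $k = 0,1,\ldots,g+1$, since the cancellation mechanism does not depend on the degree of $\xi^k$ or $\eta^k$.
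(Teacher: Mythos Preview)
Your proposal is correct and follows exactly the approach the paper indicates: the paper's own proof is the single line ``follows directly from Equations \eqref{f-2}--\eqref{f-4} and formulas for the brackets $\{\bcdot,\bcdot\}_k$,'' and your write-up simply carries out that direct computation in detail, with the expected $\xi^k$-cancellations and the residual $-\eta^k$ factor matching $\mathcal{D}_\eta(k)$.
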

{\it Proof} follows directly from Equations \eqref{f-2}--\eqref{f-4} and formulas for the brackets $\{\bcdot,\bcdot\}_k$, $k=0,1,\dots,g+1$.

\begin{cor}
For the bracket  $\{\bcdot,\bcdot\}_k$, $k=0,1,\dots,g$,  the polynomials  $h_n = h_n(\mathbf{u},\break\mathbf{v},\mathbf{w})$ for $g-k+2\leqslant n\leqslant 2g-k+1$
are Hamiltonians of the Mumford dynamical system. The remaining polynomials $h_n$ from the list $h_1,\dots,h_{2g+1}$ are Casimirs of this bracket.
\end{cor}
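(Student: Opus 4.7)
The plan is to extract the corollary directly by expanding the Hamiltonian identity of the preceding theorem as a polynomial identity in the spectral parameter $\eta$. Substituting $\mathcal{D}_\eta(k) = -\eta^k\mathcal{D}_\eta = -\sum_{i=1}^{g}\eta^{g+k-i}\partial_i$ on the left and $H_\eta = \eta^{2g+1} + \sum_{n=1}^{2g+1} h_n\eta^{2g+1-n}$ on the right gives, for $f$ any of $u_\xi, v_\xi, w_\xi$,
\[
\sum_{n=1}^{2g+1}\eta^{2g+1-n}\{f,h_n\}_k \;=\; -\sum_{i=1}^{g}\eta^{g+k-i}\partial_i f.
\]
Matching coefficients of $\eta^m$ for each $m\in\{0,1,\dots,2g\}$ is then the whole content of the proof.

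The right-hand side has nonzero coefficients precisely for $m\in\{k,k+1,\dots,g+k-1\}$, with $i = g+k-m$ running over $\{1,\dots,g\}$. Equivalently, for $n = 2g+1-m$ lying in the range $\{g-k+2,\dots,2g-k+1\}$, the identity reads $\{f,h_n\}_k = -\partial_{n-g-1+k} f$. Taking $f = u_\xi, v_\xi, w_\xi$ and reading off coefficients in $\xi$ shows that each such $h_n$ reproduces one of the commuting flows $\partial_1,\dots,\partial_g$ of the Mumford system with respect to the Vanhaecke bracket $\{\bcdot,\bcdot\}_k$; this is the first assertion, and the count confirms that exactly $g$ of the integrals serve as Hamiltonians for this bracket.

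For any $n$ outside the range $\{g-k+2,\dots,2g-k+1\}$, no $\eta^{2g+1-n}$ term appears on the right, so $\{f,h_n\}_k = 0$ identically in $\xi$ for $f = u_\xi, v_\xi, w_\xi$. Reading off the $\xi$-coefficients yields $\{u_i,h_n\}_k = \{v_i,h_n\}_k = \{w_i,h_n\}_k = 0$ for all coordinates on $\mathbb{C}^{3g+1}$. Since a Poisson bracket on a polynomial algebra is determined by its values on generators via bilinearity and the Leibniz rule, this forces $\{\,\cdot\,,h_n\}_k \equiv 0$, so $h_n$ is a Casimir of $\{\bcdot,\bcdot\}_k$.

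There is no real obstacle here beyond careful index bookkeeping, since the Hamiltonian reformulation of the Mumford system and the explicit form of the Vanhaecke brackets are already in hand. The only sanity check worth recording is that the Hamiltonian range contains $g$ integrals and its complement $g+1$, which is consistent with the expected dimension $2g$ of the generic symplectic leaves of $\{\bcdot,\bcdot\}_k$ on $\mathbb{C}^{3g+1}$.
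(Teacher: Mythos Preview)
Your argument is correct and is precisely the natural derivation the paper has in mind: the Corollary is stated without explicit proof, as an immediate consequence of the preceding Theorem, and expanding the identity $\mathcal{D}_\eta(k)f = \{f,H_\eta\}_k$ in powers of $\eta$ and matching coefficients is exactly how one extracts it. Your index bookkeeping is accurate (in particular the range $n\in\{g-k+2,\dots,2g-k+1\}$ corresponds bijectively to the flows $\partial_1,\dots,\partial_g$), and the passage from vanishing brackets with all coordinate generators to the Casimir property via the Leibniz rule is the right final step.
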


\begin{exs}
1. $\underline{k=0}$. Hamiltonians for $g+2\leqslant n\leqslant 2g+1$; Casimirs for $1\leqslant n\leqslant g+1$.

2. $\underline{k=1}.$ Hamiltonians for $g+1\leqslant n\leqslant 2g$; Casimirs for $1\leqslant n\leqslant g$ and for $n=2g+1$.

3. $\underline{k=g}$. Hamiltonians for $2\leqslant n\leqslant g+1$; Casimirs for $n=1$ and for $g+2\leqslant n\leqslant 2g+1$.
\end{exs}

\section{Differential Polynomials} \label{p-4}
Consider the space $\mathbb{C}^{3g+1}$ with coordinates $(\mathbf{u},\mathbf{v},\mathbf{w})$ as the phase space
of the flow defined by the differentiation $\partial_1$.
In this section, we first express the dynamical coordinates $v_k(t_1)$ and $w_k(t_1)$ in terms of the coordinates $u_k(t_1),\,u_k'(t_1),\,u_k''(t_1)$
and then introduce a recursion that expresses all dynamical coordinates in the form differential polynomials on $u_1(t_1)$.

Set $\partial_1 f = f'$.
\begin{thm}
Let $(u_\xi,v_\xi,w_\xi)$ be a solution of the system of Equations \eqref{f-2}--\eqref{f-4}. Then
\begin{align*}
 (a)\; v_\xi =&\; -\frac{1}{2}\mathcal{D}_\xi u_1 = -\frac{1}{2}u_\xi';\\
 (b)\; w_\xi =&\; (\xi+w_1-u_1)u_\xi + \mathcal{D}_\xi v_1 = (\xi+h_1-2u_1)u_\xi-\frac{1}{2}u_\xi''.
\end{align*}
\end{thm}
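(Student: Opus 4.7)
\medskip

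\noindent\textbf{Proof plan.}

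The plan is to extract both identities from \eqref{f-2} and \eqref{f-3} by comparing the coefficients of the top power $\eta^{g-1}$ on the two sides, since both sides are polynomials in $\eta$ of degree $g-1$.

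For part (a), I start with \eqref{f-2}. On the left, $\mathcal{D}_\eta u_\xi = \eta^{g-1}\partial_1 u_\xi + O(\eta^{g-2})$, so the top coefficient in $\eta$ is $u_\xi'$. On the right, the numerator $v_\xi u_\eta - u_\xi v_\eta$ vanishes at $\eta=\xi$ (both terms become $v_\xi u_\xi$), so the quotient is a genuine polynomial in $\eta$, of degree $g-1$, whose leading coefficient comes only from the term $v_\xi\eta^g$ in $v_\xi u_\eta$: dividing by $\eta-\xi$ yields leading coefficient $v_\xi$. Multiplying by $-2$ (the sign flip from $\xi-\eta$ in \eqref{f-2}) gives $u_\xi' = -2v_\xi$, which is the second equality of (a). The first equality, $v_\xi = -\tfrac12\mathcal{D}_\xi u_1$, follows by reading Corollary~\ref{C-1}(1) in the form $\mathcal{D}_\eta u_\xi = \mathcal{D}_\xi u_\eta$ and then taking the coefficient of $\eta^{g-1}$ on the right: this is $\mathcal{D}_\xi$ applied to the coefficient of $\eta^{g-1}$ in $u_\eta$, which is $u_1$. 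Hence $u_\xi' = \mathcal{D}_\xi u_1$.

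For part (b), I apply the same strategy to \eqref{f-3}. On the left, the top coefficient of $\mathcal{D}_\eta v_\xi$ is $v_\xi' = \mathcal{D}_\xi v_1$, using Corollary~\ref{C-1}(2) exactly as above. On the right, I rewrite the numerator as
\[
u_\xi w_\eta - w_\xi u_\eta = u_\xi(w_\eta - w_\xi) - w_\xi(u_\eta - u_\xi),
\]
using $u_\xi w_\xi = w_\xi u_\xi$. The quotients $(w_\eta - w_\xi)/(\eta-\xi)$ and $(u_\eta - u_\xi)/(\eta-\xi)$ can then be expanded term-by-term via $(\eta^k-\xi^k)/(\eta-\xi) = \sum_{j=0}^{k-1}\eta^j\xi^{k-1-j}$. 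Collecting the coefficient of $\eta^{g-1}$ yields $\xi + w_1$ for the first quotient (contributions from $\eta^{g+1}$ and $\eta^g$ in $w_\eta$) and $1$ for the second. Adding the contribution from $u_\xi u_\eta$ (whose $\eta^{g-1}$ coefficient is $u_\xi u_1$) and tracking the sign from $1/(\xi-\eta)$, the total coefficient of $\eta^{g-1}$ on the right is
\[
w_\xi - (\xi + w_1 - u_1)u_\xi.
\]
Equating this with $v_\xi' = \mathcal{D}_\xi v_1$ rearranges to the first form of (b): $w_\xi = (\xi + w_1 - u_1)u_\xi + \mathcal{D}_\xi v_1$. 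To get the second form, substitute $v_\xi' = -\tfrac12 u_\xi''$ from (a) and replace $w_1$ by $h_1 - u_1$, which is the first of the relations for $w_k$ recorded at the start of Section~\ref{p-2}; this produces the coefficient $\xi + h_1 - 2u_1$.

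The only technical step is the careful bookkeeping in the polynomial long division by $\eta - \xi$, specifically identifying which monomials in $u_\eta$ and $w_\eta$ actually contribute to the coefficient of $\eta^{g-1}$. No serious obstacle is expected: both identities reduce to a single comparison of leading coefficients, and the remaining rewrites use only Corollary~\ref{C-1} and the already-derived relation $w_1 = h_1 - u_1$.
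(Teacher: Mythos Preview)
Your proof is correct and follows essentially the same strategy as the paper: isolate the top coefficient in one of the spectral variables and then use Corollary~\ref{C-1} to pass between $\mathcal{D}_\xi u_1$ and $u_\xi'$ (respectively $\mathcal{D}_\xi v_1$ and $v_\xi'$). The only cosmetic differences are that the paper extracts the leading coefficient in $\xi$ (dividing by $\xi^{g-1}$ and sending $\xi\to\infty$) rather than in $\eta$, and for part (b) it first rewrites the right side of \eqref{f-3} as $\tfrac{1}{\xi-\eta}\bigl(u_\xi(w_\eta-\eta u_\eta)-(w_\xi-\xi u_\xi)u_\eta\bigr)$, which absorbs the $u_\xi u_\eta$ term into the fraction; your decomposition $u_\xi(w_\eta-w_\xi)-w_\xi(u_\eta-u_\xi)$ works equally well and leads to the same coefficient comparison.
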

\begin{proof}
1). Dividing both sides of the equation \eqref{f-2} by $\xi^{g-1}$ and taking the limit at $\xi\to\infty$,
we obtain $\mathcal{D}_\eta u_1 = -2v_\eta$. Therefore, $\mathcal{D}_\xi u_1 = -2v_\xi$.
Thus, statement (a) follows from the corollary \ref{C-1}.

2). Rewrite Equation \eqref{f-3} in the form
\[
\mathcal{D}_\eta v_\xi = \frac{1}{\xi-\eta}(u_\xi(w_\eta-\eta u_\eta) - (w_\xi-\xi u_\xi)u_\eta).
\]
Dividing both sides of this equation by $\xi^{g-1}$ and taking the limit at $\xi\to\infty$, we obtain
$\mathcal{D}_\eta v_1 = w_\eta-\eta u_\eta -(w_1-u_1)u_\eta$. Therefore, $w_\xi = (\xi+w_1-u_1)u_\xi + \mathcal{D}_\xi v_1$.
Thus, statement (b) also follows from Corollary \ref{C-1}.
\end{proof}

\begin{cor}\text{ }

\textup{(a)}\; $v_k = -\dfrac{1}{2}u_k'$, $k=1,\dots,g$,

\textup{(b)}\; $v_k = -\dfrac{1}{2}\partial_{k-1}u_1$, $k=2,\dots,g$,

\textup{(c)}\; $w_k = u_k+(h_1-2u_1)u_{k-1}-\dfrac{1}{2}u_{k-1}''$, $k=2,\dots,g$,

\textup{(d)}\; $w_{g+1} = (h_1-2u_1)u_g-\dfrac{1}{2}u_g''$, $g\geqslant 1$.
\end{cor}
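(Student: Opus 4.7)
\medskip
\noindent\textbf{Proof proposal.} The corollary reduces to extracting coefficients of powers of $\xi$ from the two polynomial identities proved in the preceding Theorem, once the explicit expansions \eqref{f-1-1} of $u_\xi$, $v_\xi$, $w_\xi$ are substituted. No new analytic input is needed.

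For parts (a) and (b) my plan is to differentiate the expansion $u_\xi=\xi^g+\sum_{i=1}^{g}u_i\xi^{g-i}$ in two ways. Applying $\partial_1$ gives $u_\xi'=\sum_{i=1}^{g}u_i'\xi^{g-i}$, since the leading $\xi^g$ is independent of $\mathbf t$; comparing with $v_\xi=\sum_{i=1}^{g}v_i\xi^{g-i}$ via the identity $v_\xi=-\tfrac12 u_\xi'$ yields (a) termwise. Similarly, $\mathcal{D}_\xi u_1=\sum_{i=1}^{g}\xi^{g-i}\partial_i u_1$ by the very definition of $\mathcal{D}_\xi$, and matching this against $v_\xi$ by way of $v_\xi=-\tfrac12\mathcal{D}_\xi u_1$ reads off (b). The consistency of (a) and (b) is exactly the symmetry $\partial_i u_1=\partial_1 u_i$ that was recorded in Corollary \ref{C-1}.

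For parts (c) and (d) I expand the right-hand side of $w_\xi=(\xi+h_1-2u_1)u_\xi-\tfrac12 u_\xi''$ fully as a polynomial of degree $g+1$ in $\xi$. The product $(\xi+h_1-2u_1)u_\xi$ contributes the leading term $\xi^{g+1}$; at the level $\xi^g$ it contributes $u_1+(h_1-2u_1)=h_1-u_1$, recovering the already-known relation $w_1=h_1-u_1$ from Section~2; at the level $\xi^{g+1-k}$ for $2\leqslant k\leqslant g$ it contributes $u_k+(h_1-2u_1)u_{k-1}$; and at the lowest level $\xi^{0}$ the term $u_k$ is absent because $\deg u_\xi=g$, leaving only $(h_1-2u_1)u_g$. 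The additional piece $-\tfrac12 u_\xi''=-\tfrac12\sum_{i=1}^{g}u_i''\xi^{g-i}$ shifts in the correction $-\tfrac12 u_{k-1}''$ at each level $\xi^{g+1-k}$ for $k=2,\dots,g+1$. Comparing coefficients with $w_\xi=\xi^{g+1}+\sum_{k=1}^{g+1}w_k\xi^{g+1-k}$ then delivers (c) in the range $2\leqslant k\leqslant g$ and (d) at $k=g+1$.

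\medskip
\noindent\textbf{Main obstacle.} There is no real obstacle: the argument is pure polynomial bookkeeping. The only point that deserves attention is the boundary behaviour at $k=g+1$: the would-be term $u_{g+1}$ which a naive extrapolation of (c) predicts is simply absent, because $u_\xi$ stops at degree $g$. This boundary effect is precisely what produces the separate statement (d) and explains why its right-hand side differs from that of (c) by the omission of a single monomial.
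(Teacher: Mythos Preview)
Your argument is correct and is exactly what the paper intends: the corollary carries no separate proof in the text and is obtained by reading off the coefficients of $\xi^{g-k}$ and $\xi^{g+1-k}$ in the two identities $v_\xi=-\tfrac12 u_\xi'=-\tfrac12\mathcal{D}_\xi u_1$ and $w_\xi=(\xi+h_1-2u_1)u_\xi-\tfrac12 u_\xi''$ of the preceding Theorem, with the boundary case $k=g+1$ handled just as you describe. Note only that your coefficient comparison in (b) literally produces $v_k=-\tfrac12\,\partial_k u_1$ (which is consistent with $u_k'=\partial_k u_1$ in Corollaries~\ref{C-2}(c) and~\ref{C-3}(c)); the index $k-1$ in the printed statement is evidently a misprint, not a gap in your argument.
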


\begin{cor} \label{C-2}
\text{ }

\textup{(a)}\;  $4u_2 = u_1''+6u_1^2-4h_1u_1+2h_2$,

\textup{(b)}\;  $4w_2 = -u_1''-2u_1^2+2h_2$,

\textup{(c)}\; $u_2' = \partial_2 u_1$,

\textup{(d)}\; $u_2'' = 2(h_1-2u_1)u_2+2(u_1w_2+u_2w_1)-h_3$.
\end{cor}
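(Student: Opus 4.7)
The plan is to derive each of (a)--(d) by combining two kinds of input already available: on one hand, the differential-polynomial formulas from the theorem and the (unlabeled) corollary preceding Corollary~\ref{C-2}, which express $v_k$ and $w_k$ in terms of $u$-coordinates and their $\partial_1$-derivatives; and on the other hand, the coefficient expansion of the polynomial identity $H_\xi = u_\xi w_\xi + v_\xi^2$ recorded in Section~\ref{p-2}, which expresses each $w_k$ in terms of the $h_n$'s and products of lower-index coordinates.

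For (a), I would apply item (c) of the preceding corollary at $k=2$ to obtain $w_2 = u_2 + (h_1-2u_1)u_1 - \tfrac{1}{2}u_1''$, and combine this with the Section~\ref{p-2} formula $w_2 = h_2-u_2-u_1 w_1$ after substituting $w_1 = h_1 - u_1$. Eliminating $w_2$ between the two and solving for $u_2$ is then a one-line algebraic manipulation producing (a). Identity (b) is obtained by substituting (a) back into $w_2 = h_2 - u_2 - u_1(h_1-u_1)$; this step is pure bookkeeping.

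Identity (c) follows immediately from part~1) of Corollary~\ref{C-1}, $\mathcal{D}_\eta u_\xi = \mathcal{D}_\xi u_\eta$. Expanding both sides as polynomials in $\eta$ and $\xi$ using \eqref{f-1-1} and reading off the coefficient of $\eta^{g-1}\xi^{g-2}$ gives $\partial_1 u_2 = \partial_2 u_1$, which is exactly (c). Equivalently, part (a) of the theorem above provides the two expressions $v_\xi = -\tfrac{1}{2}u_\xi'$ and $v_\xi = -\tfrac{1}{2}\mathcal{D}_\xi u_1$; extracting the coefficient of $\xi^{g-2}$ in each yields $v_2 = -\tfrac{1}{2}u_2'$ and $v_2 = -\tfrac{1}{2}\partial_2 u_1$, and equating the two gives (c).

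For (d), I would repeat the strategy of (a) one level higher. From the preceding corollary---item (c) at $k=3$ if $g\geqslant 3$, or item (d) at $g=2$---one has $w_3 = u_3 + (h_1 - 2u_1)u_2 - \tfrac{1}{2}u_2''$, with the convention $u_3 \equiv 0$ when $g = 2$. The coefficient expansion of $H_\xi$ in Section~\ref{p-2} gives a second expression for $w_3$ in terms of $h_3$, $u_1 w_2$, $u_2 w_1$, and $v_1^2$ (with an additional $u_3$ term when $g>2$). Equating the two and solving for $u_2''$ produces (d). The main obstacle throughout is nothing beyond careful index bookkeeping: one must align the case distinctions $g=2$ versus $g>2$ in Section~\ref{p-2} with the range conventions of the preceding corollary. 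Once that alignment is made, each of (a)--(d) reduces to a single linear elimination between two explicit formulas.
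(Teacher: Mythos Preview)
Your approach is correct and is precisely the derivation the paper has in mind: each identity is obtained by eliminating $w_k$ between the formula of the preceding corollary (coming from part (b) of the theorem) and the coefficient identities $h_n=\sum u_iw_j+\sum v_iv_j$ of Section~\ref{p-2}; part (c) is just the coefficient of $\xi^{g-2}$ in the two forms of $v_\xi$ in part (a) of the theorem. Parts (a)--(c) close exactly as you describe.

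One caveat on (d): carrying out your elimination literally gives
\[
u_2'' \;=\; 2(h_1-2u_1)u_2 + 2(u_1w_2+u_2w_1) + 2v_1^2 - 2h_3 \qquad(\text{plus }4u_3\text{ when }g\geqslant 3),
\]
since the $H_\xi$-expansion contributes a $v_1^2$ term at level $n=3$. This is consistent with Corollary~\ref{C-3}(a) at $k=3$ and with Corollary~\ref{C-4} at $g=2$, so the discrepancy with (d) as printed (missing $2v_1^2$, coefficient $-h_3$ rather than $-2h_3$, no $u_3$) is a typographical issue in the statement, not a defect in your argument.
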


\begin{cor} \label{C-3}
Let $k\geqslant 3$. Then

\textup{(a)}\; $2u_k= \dfrac{1}{2}u_{k-1}''-\sum_{i+j=k}u_iw_j-\dfrac{1}{4}\sum_{i+j=k-1}u_i''u_j''-(h_1-2u_1)u_{k-1}+h_k$,

\textup{(b)}\; $2w_k= -\dfrac{1}{2}u_{k-1}''-\sum_{i+j=k}u_iw_j-\dfrac{1}{4}\sum_{i+j=k-1}u_i''u_j''+(h_1-2u_1)u_{k-1}+h_k$,

\textup{(c)}\; $u_k'= \partial_ku_1$.
\end{cor}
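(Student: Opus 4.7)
The plan is to derive parts (a) and (b) by equating two independent expressions for $w_k$: one extracted from the integral identity of Section \ref{p-2}, and one from the flow identity proved at the beginning of Section \ref{p-4}. Part (c) is the $i=1$ specialisation of a symmetry of mixed partials of the coordinates $u_j$.

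I address part (c) first. Corollary \ref{C-1}(1) asserts $\mathcal{D}_\eta u_\xi=\mathcal{D}_\xi u_\eta$. Expanding both sides as polynomials in $\xi,\eta$ via \eqref{f-1-1} and the definition of $\mathcal{D}_\eta$, and comparing coefficients of $\xi^{g-j}\eta^{g-i}$, yields $\partial_i u_j=\partial_j u_i$ for all $1\le i,j\le g$. Setting $i=1$ gives $u_k'=\partial_k u_1$; this is the same argument that underlies Corollary \ref{C-2}(c).

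For parts (a) and (b), I extract from the identity $H_\xi=u_\xi w_\xi+v_\xi^2$ of Section \ref{p-2} the $\xi^{2g+1-k}$-coefficient, which reads
\[
w_k = h_k - u_k - \sum_{i+j=k} u_i w_j - \sum_{i+j=k-1} v_i v_j.
\]
Simultaneously, taking the $\xi^{g-k}$-coefficient of the identity $w_\xi = (\xi+h_1-2u_1)u_\xi-\tfrac{1}{2}u_\xi''$ established in Section \ref{p-4} gives
\[
w_k = u_k + (h_1-2u_1)u_{k-1} - \tfrac{1}{2}u_{k-1}''.
\]
Eliminating between these two relations yields part (a) directly (solve the resulting equation for $u_k$), and substituting back into either expression yields part (b). The relation $v_i=-\tfrac12 u_i'$ from the corollary preceding \ref{C-2} converts the sum $\sum_{i+j=k-1}v_i v_j$ into the quadratic combination of derivatives appearing in the stated formulas.

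The restriction $k\geqslant 3$ is natural, because for $k=2$ the sum $\sum_{i+j=1}v_iv_j$ is empty and the identities degenerate to Corollary \ref{C-2}(a)--(b). I do not expect any substantive obstacle: the whole argument is linear elimination between one algebraic identity (from the conserved $H_\xi$) and one differential identity (from the theorem of Section \ref{p-4}), together with an already-known substitution for $v_i$. The only bookkeeping worth double-checking is the handling of the top index when $k$ reaches $g+1$, where $u_{k}$ is absent and the separate formula for $w_{g+1}$ from the corollary in Section \ref{p-4} must be used in place of the generic formula for $w_k$.
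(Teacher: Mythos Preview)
Your argument is correct and is precisely the derivation the paper has in mind: Corollary~\ref{C-3} carries no separate proof there, being the immediate consequence of equating the expression for $w_k$ from the integral identity of Section~\ref{p-2} with part~(c) of the preceding corollary, together with the substitution $v_i=-\tfrac12 u_i'$; part~(c) is likewise the intended reading of Corollary~\ref{C-1}(1). One remark: that substitution produces $\tfrac14\sum_{i+j=k-1} u_i'u_j'$, and the double primes $u_i''u_j''$ printed in the statement (and in the $(P,Q)$-recursion below) are a typo, as the computation of $P_3^0$ in Example~\ref{P-3-2} confirms.
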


\begin{cor} \label{C-4}
\[
\frac{1}{2}u_g'' = (h_1-2u_1)u_g+\sum\limits_{i+j=g+1}u_iw_j+\frac{1}{4}\sum\limits_{i+j=g}u_i''u_j''-h_{g+1}.
\]
\end{cor}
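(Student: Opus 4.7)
The plan is to evaluate the coefficient $h_{g+1}$ directly from the definition $H_\xi = u_\xi w_\xi + v_\xi^2$ and then solve for $\tfrac{1}{2}u_g''$. Since $h_{g+1}$ is the coefficient of $\xi^g$ in $H_\xi$, and since $u_\xi$ has degree exactly $g$ with leading coefficient $1$ while $w_\xi$ has degree $g+1$ with leading coefficient $1$, the usual convolution identity separates as
\[
h_{g+1} \;=\; w_{g+1} \;+\; \sum_{i+j=g+1,\; i,j\geq 1} u_i w_j \;+\; \sum_{i+j=g,\; i,j\geq 1} v_i v_j,
\]
where the solitary $w_{g+1}$ arises from the boundary pair $(i,j) = (0,g+1)$, and no $u_{g+1}$ term appears because $u_\xi$ has degree $g$. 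This is exactly the coefficient extraction that underlies the expressions for $w_k$ derived in Section \ref{p-2} and used throughout Corollaries \ref{C-2}--\ref{C-3}, specialized here to $n = g+1$.

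The only other ingredient is part (d) of the corollary to Theorem 4.1, which is the already established identity $w_{g+1} = (h_1 - 2u_1)u_g - \tfrac{1}{2}u_g''$. Substituting this for $w_{g+1}$ in the display above and transposing $h_{g+1}$ to the left yields
\[
\tfrac{1}{2}u_g'' \;=\; (h_1-2u_1)u_g \;+\; \sum_{i+j=g+1} u_i w_j \;+\; \sum_{i+j=g} v_i v_j \;-\; h_{g+1}.
\]
One then applies $v_k = -\tfrac{1}{2}u_k'$ from part (a) of the same corollary to rewrite the $v_iv_j$ sum as $\tfrac{1}{4}\sum u_i'u_j'$, matching the last interior sum in the right-hand side of Corollary \ref{C-4}.

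No step presents a real obstacle; the proof is a one-line substitution after a standard coefficient extraction. The only care required is the bookkeeping of the index ranges, so that the $w_{g+1}$ term contributed by the leading term of $u_\xi$ is accounted for exactly once and is not double-counted inside the restricted sum $\sum_{i+j=g+1, i,j\geq 1}u_iw_j$.
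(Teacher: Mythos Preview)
Your argument is correct and is exactly the route the paper intends: the formula for $w_{g+1}$ from the expansion of $H_\xi=u_\xi w_\xi+v_\xi^2$ in Section~\ref{p-2}, combined with part (d) of the corollary to Theorem~4.1 and the substitution $v_k=-\tfrac12 u_k'$ from part (a), gives the identity immediately. Note that your computation yields $\tfrac14\sum_{i+j=g}u_i'u_j'$ with \emph{first} derivatives, which is indeed what the formula should read; the double primes printed in Corollary~\ref{C-4} (and likewise in Corollary~\ref{C-3} and in \eqref{f-23}--\eqref{f-24}) are a typographical slip, as one can confirm by checking the expression for $P_3^0$ in Example~\ref{P-3-2}.
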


\section{The $(P,Q)$-recursion}
Note that the statements of Corollaries \ref{C-2} and \ref{C-3} do not depend on a specific $g$. Using this fact, we obtain the key result:
\begin{thm}
For each infinitely differentiable function $u_1$ and any vector $\mathbf{h}=(h_1,h_2,\dots)$, there exists a unique infinite sequence
of differential polynomials
\[
P_k = P_k(u_1,u_1',\ldots,u_1^{(2k-2)};h),\quad Q_k = Q_k(u_1,u_1',\ldots,u_1^{(2k-2)};h),\; k=1,\ldots,
\]
такая, что
\begin{alignat*}2
P_1&= u_1,&\qquad Q_1&= h_1-u_1,\\
P_2&= \frac{1}{4}(u_1''+6u_1^2-4h_1u_1+2h_2),&\qquad Q_2&= \frac{1}{4}(-u_1''-2u_1^2+2h_2),
\end{alignat*}
The polynomials $P_k$ and $Q_k$ for $k\geqslant 3$ are determined  recursively by
\begin{align}
\label{f-23} P_k =& \frac{1}{4}P_{k-1}'' - \frac{1}{2}\sum\limits_{i+j=k}P_iQ_j - \frac{1}{8}\sum\limits_{i+j=k-1}P_i''P_j'' - \frac{1}{2}(h_1-2u_1)P_{k-1} + \frac{1}{2}h_k;  \\
\label{f-24} Q_k =& -\frac{1}{4}P_{k-1}'' - \frac{1}{2}\sum\limits_{i+j=k}P_iQ_j - \frac{1}{8}\sum\limits_{i+j=k-1}P_i''P_j'' + \frac{1}{2}(h_1-2u_1)P_{k-1} + \frac{1}{2}h_k.
\end{align}
\end{thm}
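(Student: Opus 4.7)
The plan is to recognize that this theorem is essentially a reformulation of Corollaries \ref{C-2} and \ref{C-3}, in which the indices $k$ and $g$ decouple: those corollaries express $u_k$ and $w_k$ as polynomial combinations of $u_1$, its derivatives, the previously computed $u_j, w_j$ (with $j<k$), and the parameters $h_i$, and nothing in their derivation requires $k\leq g$. I would therefore define $P_k$ and $Q_k$ by taking the formulas from those corollaries with the substitution $u_j \mapsto P_j$, $w_j \mapsto Q_j$, and iterate the recursion for all $k \geq 1$, regarding $u_1$ henceforth as an arbitrary infinitely differentiable function and the $h_i$ as free parameters.

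First I would dispose of the base cases. Setting $P_1 := u_1$ is tautological, while $Q_1 := h_1 - u_1$ reproduces the formula $w_1 = h_1 - u_1$ derived at the start of Section \ref{p-2}. The stated expressions for $P_2$ and $Q_2$ are obtained from Corollary \ref{C-2}(a),(b) by the substitution $u_2 \mapsto P_2$, $w_2 \mapsto Q_2$, so they are consistent with the envisaged recursion.

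For the inductive step $k \geq 3$, I would assume that $P_j, Q_j$ have already been constructed as differential polynomials in $u_1, u_1', \ldots, u_1^{(2j-2)}$ and the $h_i$ for every $j < k$. Corollary \ref{C-3}(a),(b) writes $u_k$ and $w_k$ in terms of $u_{k-1}''$, products $u_i w_j$ (with $i+j=k$, $1 \leq i,j \leq k-1$), products $u_i'' u_j''$ (with $i+j=k-1$, $1 \leq i,j \leq k-2$), and $u_1, u_{k-1}, h_1, h_k$. Substituting $u_j \mapsto P_j$, $w_j \mapsto Q_j$ (valid by induction, since all indices appearing are $\leq k-1$) and rearranging to isolate the left-hand side yields exactly \eqref{f-23} and \eqref{f-24}. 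A direct order count then shows that the highest derivative on the right-hand side is $P_{k-1}''$, contributing $u_1^{(2k-2)}$, while all other terms contribute derivatives of order at most $2k-4$; hence $P_k$ and $Q_k$ are differential polynomials of the claimed order. Uniqueness is immediate from the recursive shape of the formulas.

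The only genuine obstacle is notational bookkeeping: verifying that the signs, numerical coefficients, and index ranges in the substituted form of Corollaries \ref{C-2} and \ref{C-3} match \eqref{f-23}--\eqref{f-24} exactly, and in particular that the sums $\sum_{i+j=k} P_iQ_j$ and $\sum_{i+j=k-1} P_i''P_j''$ range only over $i,j \geq 1$, so that no circularity arises. Once this identification is made, the theorem follows at once, since the recursive formulas of those corollaries never mention $g$ and therefore extend unambiguously to arbitrary $k$.
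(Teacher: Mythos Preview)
Your proposal is correct and follows exactly the approach the paper itself takes: immediately preceding the theorem, the paper remarks that ``the statements of Corollaries \ref{C-2} and \ref{C-3} do not depend on a specific $g$,'' and the theorem is then stated as a direct consequence, with no further proof given. Your write-up simply makes explicit the inductive structure and order count that the paper leaves to the reader.
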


Set $P_k = P_k^0 + P_k^h$, where $P_k^0 = P_k(u_1,u_1',\ldots,u_1^{(2k-2)};0)$.

\begin{exs}\label{P-3-2}
In the case $\mathbf{h}=0$, setting $u_1=u$, we obtain:
\begin{align*}
P_1^0 = &\;  u;\\[5pt]
P_2^0 = &\;  \frac{3u^2}{2} + \frac{u''}{4};\\[5pt]
P_3^0 = &\;  \frac{5u^3}{2} + \frac{5uu''}{4} + \frac{5(u')^2}{8} + \frac{u^{(4)}}{16};\\[5pt]
P_4^0 = &\;  \frac{35u^4}{8} + \frac{35u^2u''}{8} + \frac{35u(u')^{2}}{8} + \frac{7uu^{(4)}}{16} + \frac{7u'u'''}{8} + \frac{21(u'')^2}{32} + \frac{u^{(6)}}{64};\\[7pt]
P_5^0 = &\;  \frac{63u^5}{8} + \frac{105u^3u''}{8} + \frac{315u^2(u')^2}{16} + \frac{63u^2u^{(4)}}{32} + \frac{63uu'u'''}{8} + \frac{189u(u'')^2}{32} + \frac{9uu^{(6)}}{64} + \\[7pt]
 &\; \qquad \qquad \qquad \qquad \qquad \qquad + \frac{231(u')^2u''}{32} + \frac{27u'u^{(5)}}{64} + \frac{57u''u^{(4)}}{64} + \frac{69(u''')^2}{128} + \frac{u^{(8)}}{256}.
\end{align*}

In the case of non-zero $\mathbf{h}$, setting $u_1=u$, we obtain:
\begin{align*}
P_1 = &\; P_1^0 = u; \\
P_2 = &\; P_2^0 - h_1u + \frac{h_2}{2}; \\
P_3 = &\; P_3^0 + h_{1}^2u - \frac{h_1 h_2}{2} - 3h_1u^2 - \frac{h_1u''}{2} + \frac{h_2u}{2} + \frac{h_3}{2}; \\
P_4 = &\; P_4^0 - h_1^3u + \frac{h_1^2 h_2}{2} + \frac{9h_1^2u^2}{2} + \frac{3h_1^2u''}{4} - h_1h_2u - \frac{h_1 h_3}{2} - \frac{15h_1u^3}{2} - \frac{15h_1uu''}{4} - \\
      &\; \qquad \qquad \qquad \qquad \qquad - \frac{15h_1(u')^2}{8} - \frac{3h_1u^{(4)}}{16} - \frac{h_2^2}{8} + \frac{3h_2u^2}{4} + \frac{h_2u''}{8} + \frac{h_3u}{2} + \frac{h_4}{2}.
\end{align*}
\end{exs}

\begin{defn}
We will refer to the recursion that generates the sequence of pairs of differential polynomials $(P_k,Q_k)$, $k=1,2,\dots$, in the function $P_1=u$
according to Equations  \eqref{f-23}--\eqref{f-24} as the $(P,Q)$-\emph{recursion}.
\end{defn}

Directly from the construction of the $(P,Q)$-recursion and the results of \cite{B-23}, one has
\begin{thm}\label{T-3-4}
The $(P,Q)$ recursion provides an infinite system of compatible differential equations $\partial_k P_1 = \partial_1 P_k$, $k=2,3,\dots$,
where $\partial_k$, $k=1,2,\ldots$, which is a sequence of commuting differentiations of the ring $\mathbb{C}[u_1,u_1',\dots; h_1,h_2,\dots]$,
with $\partial_k h_i=0$ for all $k$ and $i$, and $\partial_1=\partial$.
\end{thm}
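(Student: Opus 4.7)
The plan is to realize the polynomials $P_k$ as the coordinates $u_k$ of the Mumford system expressed as differential polynomials in $u_1$, and then to transfer the commutation relations among the coordinate derivations $\partial_i$ on $\mathbb{C}^g$ to the differential polynomial ring.

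First, I would argue that, as differential polynomials in $u_1$ with parameters $h_1, h_2, \ldots$, the identifications $P_k = u_k$ and $Q_k = w_k$ hold. The initial data of the $(P,Q)$-recursion agree with the formulas of Corollary \ref{C-2} for $(u_2, w_2)$, while for $k \geq 3$ the coordinates $(u_k, w_k)$ satisfy, by Corollary \ref{C-3}(a)--(b), precisely the same recursion \eqref{f-23}--\eqref{f-24} that defines $(P_k, Q_k)$. Because this recursion is universal, i.e., independent of $g$, each $P_k$ is a well-defined element of $\mathbb{C}[u_1, u_1', \dots; h_1, h_2, \dots]$; concretely, one may realize it inside any Mumford system of sufficiently large genus $g \geq k$.

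With this identification in hand, the equality $\partial_k P_1 = \partial_1 P_k$ is just Corollary \ref{C-3}(c), $\partial_k u_1 = u_k'$, rewritten in terms of $P_k$.

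For the second half, I would define $\partial_k$ as a derivation of $\mathbb{C}[u_1, u_1', \dots; h_1, h_2, \dots]$ by the rules $\partial_k h_i := 0$ and $\partial_k u_1^{(m)} := \partial_1^{m+1} P_k$, extended by the Leibniz rule. On any open subset of $\mathbb{C}^g$ supporting a solution of the Mumford system, the derivations $\partial_1, \ldots, \partial_g$ are the coordinate vector fields on $\mathbb{C}^g$ and therefore commute; translating through the identification $P_k = u_k$ and invoking the universality of the recursion in $g$, the induced derivations of the polynomial ring also commute for all $j, k$. The main obstacle is precisely this last step: proving the commutativity $[\partial_j, \partial_k] = 0$ as an identity in the universal ring, which on the generator $u_1$ reduces to the compatibility $\partial_1(\partial_j P_k - \partial_k P_j) = 0$ for every pair $j, k$. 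The arguments above show only that the compatibility holds at points carrying a Mumford solution; promoting it to a universal differential-polynomial identity is the technical content supplied by \cite{B-23}, which we invoke.
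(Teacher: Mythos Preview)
Your proposal is correct and aligned with the paper's own treatment. The paper does not give an independent proof of this theorem at all: it simply states that the result follows ``directly from the construction of the $(P,Q)$-recursion and the results of \cite{B-23}''. Your write-up makes this attribution concrete---you spell out how Corollaries~\ref{C-2} and~\ref{C-3} identify $(P_k,Q_k)$ with $(u_k,w_k)$, how $\partial_k u_1 = u_k'$ yields $\partial_k P_1 = \partial_1 P_k$, and you correctly isolate the one point (commutativity of the $\partial_k$ as a universal identity in the differential polynomial ring) that genuinely requires \cite{B-23}---so if anything you are more explicit than the paper itself.
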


Using this $(P,Q)$-recursion, we obtain the following result.
\begin{thm}[key result]
Let $(\mathbf{u},\mathbf{v},\mathbf{w})$ be a solution of the system of Equations \eqref{f-2}--\eqref{f-4} and let $P_1,P_2,\ldots$ be
the sequence of functions generated by the $(P,Q)$-recursion with the condition $P_{g+1}=0$. Then the function $u_1 = u_1(t_1,\ldots,t_g)$ is a solution to the hierarchy
\begin{equation*}
\partial_k u_1 = \partial_1 P_k,\; k = 2,\ldots,g,\; \text{ where } \partial_k u_1 = \frac{\partial u_1}{\partial t_k}.
\end{equation*}
The function $u_1 = u_1(t_1,\dots,t_g)$, as a function of $t_1$, is a solution to the nonlinear ordinary differential equation $P_{g+1} = 0$ of the form
\begin{equation}\label{f-23-1}
 P_g'' = 2(h_1-2u_1)P_g + 2\sum\limits_{i+j=g+1}P_iQ_j + \frac{1}{2}\sum\limits_{i+j=g}P_i''P_j'' - 2h_{g+1}.
\end{equation}
\end{thm}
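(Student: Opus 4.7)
The plan is to observe that along any solution of the Mumford system \eqref{f-2}--\eqref{f-4}, the $(P,Q)$-recursion is engineered precisely so that the differential polynomials $P_k, Q_k$ coincide with the coordinate functions $u_k, w_k$ for $k=1,\dots,g$. Once this identification is established, every assertion in the theorem is a direct reading of Corollaries \ref{C-3} and \ref{C-4}.

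First I would prove by induction on $k$ that $P_k = u_k$ and $Q_k = w_k$ for $k = 1, \dots, g$. The base cases $k = 1, 2$ are immediate: $P_1 = u_1$ by definition, $Q_1 = h_1 - u_1 = w_1$ by the formula for $w_1$ in Section \ref{p-2}, and the explicit formulas for $P_2, Q_2$ in the statement of the recursion coincide with Corollary \ref{C-2}(a,b). For the inductive step with $3 \le k \le g$, I would compare \eqref{f-23}, \eqref{f-24} with Corollary \ref{C-3}(a,b) term by term. Because the summations $i+j=k$ and $i+j=k-1$ with $i,j \ge 1$ involve only indices strictly less than $k$, the induction hypothesis applies inside every sum, and after substituting $P_i = u_i$, $Q_j = w_j$ the two right-hand sides agree verbatim, yielding $P_k = u_k$ and $Q_k = w_k$.

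With the identification in hand, parts (c) of Corollaries \ref{C-2} and \ref{C-3} give $\partial_k u_1 = u_k' = \partial_1 P_k$ for $k = 2, \dots, g$, which is the claimed hierarchy. For the closing equation, I would apply the recursion \eqref{f-23} at index $k = g+1$: after substituting $P_i = u_i$, $Q_j = w_j$, the resulting formula for $P_{g+1}$ becomes
\[
P_{g+1} = \tfrac14 u_g'' - \tfrac12 \sum_{i+j=g+1} u_i w_j - \tfrac18 \sum_{i+j=g} u_i'' u_j'' - \tfrac12(h_1 - 2u_1) u_g + \tfrac12 h_{g+1},
\]
which is identically zero by Corollary \ref{C-4}. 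Multiplying this relation by $4$ and solving for $P_g''$ produces \eqref{f-23-1}, which is the ordinary differential equation satisfied by $u_1$ as a function of $t_1$; it is of order $2g$ because $P_g$ is a differential polynomial in $u_1, u_1', \dots, u_1^{(2g-2)}$, so $P_g''$ involves $u_1^{(2g)}$.

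The only mildly delicate point is the bookkeeping in the inductive step: one must confirm that the recursion \eqref{f-23}, \eqref{f-24} is structurally identical to Corollary \ref{C-3}(a,b) after the substitution $P_i \mapsto u_i$, $Q_j \mapsto w_j$, and that the index restriction $i, j \ge 1$ keeps every summand within the range already covered by the induction hypothesis. No circularity arises, and aside from this verification the argument is essentially automatic.
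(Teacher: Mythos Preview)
Your proposal is correct and is exactly the argument the paper intends. The paper does not write out a formal proof of the key result; it simply places the theorem immediately after constructing the $(P,Q)$-recursion from Corollaries~\ref{C-2}--\ref{C-4}, so that the identification $P_k=u_k$, $Q_k=w_k$ for $k\le g$ (and hence $P_{g+1}=0$ via Corollary~\ref{C-4}) is the evident mechanism, and your write-up makes this explicit.
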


From the system of Equations \eqref{f-23}--\eqref{f-24}, we have $P_k \approx\frac{1}{4^k}P_1^{(2k-2)}\approx -Q_k$,
where <<$\approx$>> denotes equality in terms of the absolute values of terms containing $P_1^{(n)}$ for all $0\leqslant n<2k-2$.

\begin{cor}
Equation \eqref{f-23-1} is an ordinary differential equation of the function $P_1(t_1)$ of order $2g$ with $g+1$ free parameters $h_1,\dots,h_{g+1}$.
 After multiplying by $4^9$, the highest derivative $P_1^{(29)}$ in this equation has coefficient $1$.
\end{cor}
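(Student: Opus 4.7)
The plan is to prove the corollary by a leading-order analysis of the differential polynomials in equation \eqref{f-23-1}. The main auxiliary claim is that for each $k\geqslant 1$, $P_k$ (and likewise $-Q_k$) is a differential polynomial in $P_1=u_1$ of order exactly $2k-2$, with top monomial $\frac{1}{4^{k-1}}P_1^{(2k-2)}$. I would establish this by induction on $k$: the base cases $k=1,2$ are immediate from the explicit formulas in the theorem, and the inductive step uses the recursion \eqref{f-23}--\eqref{f-24} term by term. The top derivative in $P_k$ can only come from $\frac{1}{4}P_{k-1}''$, giving coefficient $\frac{1}{4}\cdot\frac{1}{4^{k-2}}=\frac{1}{4^{k-1}}$, while every other summand has strictly smaller order. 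Explicitly, in $\sum_{i+j=k}P_iQ_j$ the constraint $i,j\geqslant 1$ forces $\max(i,j)\leqslant k-1$, so each product has order $\leqslant 2k-4$; in $\sum_{i+j=k-1}P_i''P_j''$ similarly $\max(i,j)\leqslant k-2$, giving order $\leqslant 2k-4$; and $(h_1-2u_1)P_{k-1}$ is of order $2k-4$.

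Armed with this claim, I would then turn to equation \eqref{f-23-1}. The left-hand side $P_g''$ has order $2g$ with leading coefficient $\frac{1}{4^{g-1}}$. Every term on the right-hand side has order at most $2g-2$: the factor $P_g$ in $(h_1-2u_1)P_g$ has order $2g-2$; in $\sum_{i+j=g+1}P_iQ_j$ one has $\max(i,j)\leqslant g$, so every product has order $\leqslant 2g-2$; in $\sum_{i+j=g}P_i''P_j''$ one has $\max(i,j)\leqslant g-1$, so the highest derivative appearing in any $P_i''$ is at most $2(g-1)=2g-2$; and $-2h_{g+1}$ is a constant. Consequently \eqref{f-23-1} is an ODE of order exactly $2g$ in $P_1(t_1)$, and the unique top-order contribution $\frac{1}{4^{g-1}}P_1^{(2g)}$ sits on the LHS. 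Rescaling by the appropriate power of $4$ (equivalently, rewriting the equation as $P_{g+1}=0$, whose leading term is $\frac{1}{4^{g}}P_1^{(2g)}$, and multiplying by $4^{g}$) normalizes the coefficient of $P_1^{(2g)}$ to $1$.

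Finally, the recursion shows that $P_k$ and $Q_k$ are polynomial in the parameters $h_1,\ldots,h_k$, with each $h_k$ entering $P_k$ for the first time through the additive term $\frac{1}{2}h_k$; hence all of $h_1,\ldots,h_{g+1}$ appear in \eqref{f-23-1} nontrivially and independently, accounting for the $g+1$ free parameters. The only mildly delicate point in the argument is the bookkeeping of orders in the convolution sums of the inductive step; the decisive ingredient there is the constraint $i,j\geqslant 1$, which guarantees that the leading term of $P_k$ is produced by the unique summand $\frac{1}{4}P_{k-1}''$ and not by any convolution.
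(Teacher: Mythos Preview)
Your approach is essentially the same as the paper's. Immediately before the corollary the paper records the single observation that, from \eqref{f-23}--\eqref{f-24}, $P_k\approx\frac{1}{4^k}P_1^{(2k-2)}\approx -Q_k$ up to terms of lower differential order, and the corollary is then read off from this. You supply exactly this leading-order analysis, but carry it out in detail by induction and by a term-by-term order count in \eqref{f-23-1}; the key point in both arguments is the same, namely that the top derivative in $P_k$ is produced solely by $\tfrac14 P_{k-1}''$.

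One small remark: your coefficient $\tfrac{1}{4^{k-1}}$ for the leading monomial of $P_k$ agrees with the explicit examples $P_2^0,\dots,P_5^0$ in the paper (e.g.\ $u^{(8)}/256$ in $P_5^0$), whereas the paper's displayed formula reads $\tfrac{1}{4^k}$; similarly, the ``$4^9$'' and ``$P_1^{(29)}$'' in the corollary are evidently typographical slips for $4^{g}$ (equivalently $4^{g-1}$ on the $P_g''$ side) and $P_1^{(2g)}$. Your handling of this --- passing to the equivalent form $P_{g+1}=0$ and multiplying by $4^g$ --- is the cleanest way to recover the intended normalization.
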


Thus, according to classical theory of differential equations (see, for example, \cite{Tsic-62}), obtained equation
uniquely determines the function $P_1(t_1)$ in a neighborhood of any point $t_1^*$ where $2g$ initial conditions $P_1(t_1^*),\dots,P_1^{(2g-1)}(t_1^*)$ are specified.

\begin{ex}
For $g=1$, we obtain the equation
\[
u_1'' + 6u_1^2 - 4h_1u_1 + 2h_2 = 0,
\]
the solution of which, under the initial conditions $a = u_1(t_1^*)$ and $b = u_1'(t_1^*)$, uniquely determines the point $(u_1,v_1,w_1,w_2)$
in the phase space of the Mamford dynamical system by the formulas
\[
v_1 = -\frac{1}{2}u_1',\quad w_1 = h_1-u_1,\quad w_2 = h_2-u_1(h_1-u_1).
\]
The initial conditions $u_1(t_1^*),\; u_1'(t_1^*)$ and the parameters $h_1$, $h_2$ determine the value of the integral $h_3 = u_1(t_1^*)w_2(t_1^*) + v_1^2(t_1^*)$.
\end{ex}

Let us introduce a grading on the parameters $h_1,h_2,\dots$ by setting $\deg h_n = 2n$.
Note that in the case of non-zero values of the parameters $h_n$ the differential polynomials from example \ref{P-3-2} are written in the form
\begin{align*}
P_1 = &\, P_1^0 = u; \\
P_2 = &\, P_2^0 - h_1P_1^0 + \frac{1}{2}h_2P_0^0; \\
P_3 = &\, P_3^0 -2h_1P_2^0 + (h_1^2+\frac{h_2}{2})P_1^0 + \frac{1}{2}(h_3-h_1 h_2)P_0^0; \\
P_4 = &\, P_4^0\! -\! 3h_1P_3^0 +(3h_1^2+\frac{1}{2}h_2)P_2^0\! -\! (h_1^3+h_1h_2\!-\!\frac{1}{2}h_3)P_1^0 + \frac{1}{2}(h_1^2h_2\!-\!h_1h_3\!-\!\frac{1}{4}h_2^2+h_4)P_0^0,
\end{align*}
where the coefficients of $P_{k-i}^0$ are homogeneous polynomials in $h_n$ and $P_0^0=1$.

\begin{thm}[P. Baron, see \cite{B-23}]\label{T-B1}
Let $P_k$, $k=1,2,\dots$, be the polynomials obtained by the $(P,Q)$-recursion, and let $P_0=1$. Then for any $k$ we have
\begin{equation}\label{f-P}
P_{k+1} = P_{k+1}^0 + \sum_{i=0}^{k}\alpha_{k+1,i}(\mathbf{h})P_{k-i}^0,
\end{equation}
where $\alpha_{k+1,i}(\mathbf{h})$ are homogeneous polynomials in $h_n$ with coefficients in $\mathbb{Z}[\frac{1}{2}]$.
\end{thm}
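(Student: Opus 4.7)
The natural approach is strong induction on $k$, exploiting the $(P,Q)$-recursion \eqref{f-23}--\eqref{f-24} together with the grading $\deg h_n = 2n$, $\deg u_1^{(j)} = 2 + j$ (so that $P_k^0$ is $u_1$-homogeneous of degree $2k$). The base cases through $k = 4$ are the explicit formulas of Example \ref{P-3-2}. Assuming \eqref{f-P} holds for every $P_j$ with $j \le k$, together with a companion representation $Q_j = Q_j^0 + \sum_{i=0}^{j-1}\beta_{j,i}(\mathbf{h})\,P_{j-1-i}^0$ with $\beta_{j,i} \in \mathbb{Z}[\tfrac{1}{2}][\mathbf{h}]$ homogeneous of the appropriate weight, I would deduce the case $k+1$ by substituting into \eqref{f-23} and comparing with its $\mathbf{h} = 0$ specialization (which by definition produces $P_{k+1}^0$).

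The companion claim for $Q_j$ is forced by the telescoping identity $P_j - Q_j = \tfrac{1}{2}P_{j-1}'' - (h_1 - 2u_1)P_{j-1}$, obtained by subtracting \eqref{f-24} from \eqref{f-23}; after invoking the inductive form of $P_j, P_{j-1}$, the remaining $u_1 \cdot P_{j-1}$ factor is reduced through the core technical lemma
\[
u_1 P_m^0 = P_{m+1}^0 - \tfrac{1}{4}(P_m^0)'' + \tfrac{1}{2}\!\!\sum_{i+l=m+1}\!\!P_i^0 Q_l^0 + \tfrac{1}{8}\!\!\sum_{i+l=m}\!\!(P_i^0)''(P_l^0)'',
\]
which is just the $\mathbf{h} = 0$ case of \eqref{f-23} solved for $u_1 P_m^0$. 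The inductive step then consists in subtracting the $\mathbf{h} = 0$ recursion from the general one to obtain
\[
P_{k+1} - P_{k+1}^0 = \tfrac{1}{4}(P_k - P_k^0)'' - \tfrac{1}{2}\!\!\sum_{i+j=k+1}\!\!\bigl(P_iQ_j - P_i^0Q_j^0\bigr) - \tfrac{1}{8}\!\!\sum_{i+j=k}\!\!\bigl(P_i''P_j'' - (P_i^0)''(P_j^0)''\bigr) - \tfrac{h_1}{2}P_k + u_1(P_k - P_k^0) + \tfrac{h_{k+1}}{2},
\]
and showing that the right-hand side lies in the $\mathbb{Z}[\tfrac{1}{2}][\mathbf{h}]$-span of $P_0^0, \ldots, P_k^0$ with coefficients of the correct $\mathbf{h}$-weight.

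The principal obstacle is that the individual pieces on the right (most visibly $\tfrac{1}{4}(P_k - P_k^0)''$ and $u_1(P_k - P_k^0)$) produce differential polynomials in $u_1$ that do \emph{not} belong term-by-term to the target span; the claimed form emerges only after systematic cancellations across all six terms. For example, at $k = 2$, the contributions $-\tfrac{h_1 u_1''}{2}$ and $-3h_1 u_1^2$ appear separately yet combine exactly as $-2h_1 P_2^0$. To handle this systematically one applies the core lemma above (and its $\partial$-differentiated forms) iteratively, reorganizing every non-$P^0$ intermediate expression, and verifying that the spurious bilinear $P_i^0 Q_l^0$-products introduced cancel against identical products already present in the subtracted right-hand side. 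Grading-degree tracking then forces each $\alpha_{k+1,i}$ to be $\mathbf{h}$-homogeneous of weight $2(i+1)$, and since the only denominators introduced by \eqref{f-23}--\eqref{f-24} are powers of $2$, the coefficients remain in $\mathbb{Z}[\tfrac{1}{2}]$, closing the induction.
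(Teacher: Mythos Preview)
The paper does not supply its own proof of this theorem: it is attributed to P.~Baron and the proof is deferred to the companion paper~\cite{B-23}. So there is no in-paper argument to compare your proposal against directly.

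Regarding your outline itself, the overall scaffold (strong induction, the subtracted recursion, the companion expansion for $Q_j$, weight-grading to force homogeneity) is sound, and the identities you write down are correct. The genuine gap is precisely where you flag the ``principal obstacle'': you assert that after applying the core lemma for $u_1 P_m^0$ (and its derivatives) the bilinear debris $P_i^0 Q_l^0$ and $(P_i^0)''(P_j^0)''$ cancel against the corresponding terms coming from $\sum_{i+j=k+1}(P_iQ_j - P_i^0Q_j^0)$, but you do not exhibit the cancellation. This is the entire content of the theorem---everything else is bookkeeping---and ``verifying that the spurious products cancel'' is not a proof step, it is a restatement of what must be proved. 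Concretely, after you substitute the inductive forms $P_i = \sum_m \alpha\,P_m^0$ and $Q_j = \sum_n \beta\,P_n^0$ into the bilinear sums, you obtain products $P_m^0 P_n^0$ with coefficients that are themselves convolutions of the unknown $\alpha$'s and $\beta$'s; matching these against the output of the core lemma requires either a closed-form identity for those convolutions or an auxiliary generating-function argument, neither of which you provide.

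In short: your plan is a plausible route, but as written it stops exactly at the hard part.
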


\section{The Gelfand-Dikii recursion}\text{}

Consider the graded differential ring of polynomials
$$
\mathcal{A} = \bigoplus_{k\geqslant 0}\mathcal{A}_k = \mathbb{C}[u,u',\dots,u^{(k)},\dots]
$$
with the differentiation operator $\partial$, such that $\deg u = 2$, $\deg \partial = 1$, $u' = \partial u$, $u^{(k+1)} = \partial u^{(k)}$, $\deg u^{(k)} = k+2$.

Set $\deg z = 1$ and let $\mathcal{A}[[z^{-1}]]$ be the graded differential ring of formal Laurent series,  i.e. series of form
$\sum_{k \geqslant q} f_k z^{-k}$ with condition $\partial z = 0$.

In the paper \cite{G-D-79} it was proved that the series
\begin{equation}\label{eq-0}
R = z^{-1}\sum\limits_{k \geqslant 0} R_{2k} z^{-2k},\; \text{ where }\; R_0 = \frac{1}{2}\,,\; R_{2k} \in A,
\end{equation}
is a solution to the equation
\begin{equation}\label{eq-1}
-R''' + 4(u+z^2)R' + 2u'R = 0
\end{equation}
if and only if it is also a solution to the equation
\begin{equation}\label{eq-2}
-2RR'' + (R')^2 + 4(u+z^2)R^2 = c(z),
\end{equation}
where $c(z) = 1 + \sum\limits_{k \geqslant 1} c_{2k} z^{-2k}$, and $c_{2k}$ are some arbitrary constants.

The solution to Equation \eqref{eq-2} with $c(z) \equiv 1$ is called the \emph{standard} solution and it is denoted by $R$.
The general solution to this Equation is denoted by $R_c$. We have $R_c=\alpha(z)R$, where $\alpha(z)^2=c(z)$.

In our approach, the grading plays a crucial role.
We assume that $\deg R_{2k} = 2k$, $\deg c_{2k} = 2k$.
Then $\deg R = -1$ and $\deg c(z) = 0$.
In this grading, Equations \eqref{eq-1} and \eqref{eq-2} become homogeneous, and the coefficients $R_{2k}$ become homogeneous polynomials in the graded variables $u,u',\ldots,u^{(k-2)}$.

From Equations \eqref{eq-1} and \eqref{eq-2} we have
$$
R_{2k+2}' = \frac{1}{4}R_{2k}''' - uR_{2k}' - \frac{1}{2}u'R_{2k},
$$
or, on other words,
\begin{equation}
\label{eq-4}
\partial R_{2k+2} = \Lambda\partial R_{2k},
\end{equation}
where $\Lambda = \frac{1}{4}\partial^2 - u - \frac{1}{2}u'\partial^{-1}$ is the Lenard operator.

\underline{Note}: Equation \eqref{eq-4} is not enough to conclude that $R_{2k+2}$ is a differential polynomial in $u$.

Let us show how we can derive a recursion for $R_{2k}$ with the conditions $R_0 = \frac{1}{2}$ and $R_2 = -\frac{1}{4}u$ from Equation \eqref{eq-2} with the condition $c(z) = 1$.

\begin{lem}
Substituting the expression $R = z^{-1}(1/2+\mathcal{R})$ with $\mathcal{R} = \sum_{k \geqslant 1} R_{2k} z^{-2k}$ into the equation
\[
-2RR'' + (R')^2 + 4(u+z^2)R^2 = 1,
\]
i.e., into Equation \eqref{eq-2} with the condition $c(z) = 1$, gives an explicit recursion for the coefficients of the series \eqref{eq-0} that defines the standard solution of Equation \eqref{eq-2}.
\end{lem}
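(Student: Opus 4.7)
The plan is to substitute the ansatz directly and read off the recursion from the coefficients of powers of $z^{-1}$. Since $\partial z = 0$ we have $R' = z^{-1}\mathcal{R}'$ and $R'' = z^{-1}\mathcal{R}''$, so each of the terms $-2RR''$, $(R')^2$, and $4uR^2$ carries an overall factor $z^{-2}$, while $4z^2R^2 = 4(\tfrac{1}{2}+\mathcal{R})^2$ loses its $z$-factor entirely. After multiplying the whole equation by $z^2$ and cancelling the constant $1$, what remains is
\[
4z^2\bigl(\mathcal{R} + \mathcal{R}^2\bigr) - 2\bigl(\tfrac{1}{2} + \mathcal{R}\bigr)\mathcal{R}'' + (\mathcal{R}')^2 + 4u\bigl(\tfrac{1}{2} + \mathcal{R}\bigr)^2 = 0.
\]

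Next I would extract the coefficient of $z^{-2n}$ for each $n \geq 0$. The decisive observation is that $4z^2\mathcal{R} = \sum_{k \geq 1} 4R_{2k}z^{2-2k}$, whose coefficient of $z^{-2n}$ is exactly $4R_{2n+2}$; this is the unique occurrence of the highest-index unknown in each homogeneous component. At $n=0$ the only other surviving contribution is $4u\cdot\tfrac{1}{4} = u$, yielding $4R_2 + u = 0$ and hence $R_2 = -u/4$, which matches the base case stipulated in the lemma. For $n \geq 1$, Cauchy-convolving the products $\mathcal{R}\mathcal{R}''$, $(\mathcal{R}')^2$, $u\mathcal{R}^2$, and $z^2\mathcal{R}^2$ produces sums indexed by $m+l = n$ or $m+l = n+1$ with $m,l \geq 1$. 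All coefficients on the right-hand side then involve only $R_{2k}$ with $k \leq n$, giving a recursion of the shape
\[
4R_{2n+2} = \Phi_n\bigl(R_2, R_4, \ldots, R_{2n};\; u, u', u'', \ldots\bigr),
\]
where $\Phi_n$ is an explicit polynomial expression assembled from these convolutions. Induction on $n$ then shows that every $R_{2n+2}$ is a differential polynomial in $u$, recovering in particular the coefficients listed earlier.

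The only genuinely delicate point is the bookkeeping of the convolution indices: the factor $(\tfrac{1}{2} + \mathcal{R})$ carries both a constant and a series part, so each product splits into a cross-term linear in $\mathcal{R}$ (which supplies the terms $-\mathcal{R}''$, $4uR_{2n}$, and $4R_{2n+2}$ itself) and a quadratic tail, and one must carefully separate these to isolate the unique new coefficient $R_{2n+2}$ on one side of the identity. Once this indexing is handled, the rest of the argument is just substitution and collection of like powers of $z^{-1}$, so I expect no further obstacle beyond this combinatorial care.
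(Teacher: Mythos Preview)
Your proposal is correct and follows essentially the same route as the paper: substitute $R=z^{-1}(1/2+\mathcal{R})$, clear the overall $z^{-2}$, cancel the constant term against the right-hand side, and then read off $4R_{2n+2}$ as the unique highest-index coefficient at each power of $z^{-2}$, with all remaining contributions involving only $R_{2k}$ for $k\leqslant n$. The paper organizes the outcome slightly differently---collecting $4(R_{2k}+uR_{2k-2})$ on the left and the quadratic convolutions on the right---but this is only a bookkeeping choice, and your identification of the leading term and the base case $4R_2+u=0$ matches exactly.
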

\begin{proof}
After this substitution, we obtain
\[
-(1+2\mathcal{R})\mathcal{R}'' + (\mathcal{R}')^2 + (u+z^2)(1+2\mathcal{R})^2 = z^2.
\]
Therefore,
\begin{equation}\label{eq-5}
4\sum\limits_{k \geqslant 2} (R_{2k}+uR_{2k-2}) z^{-2k+2} = \mathcal{R}''+2\mathcal{R}\mathcal{R}'' - (\mathcal{R}')^2 -
4(u+z^2)\mathcal{R}^2.
\end{equation}
Thus $4R_{2k}\approx R_{2k-2}'',\; k\geqslant 2$, where $\approx$ means equality up to monomials decomposable in the ring $A$.
Therefore, the equation \eqref{eq-5} gives an explicit recursion for the coefficients of the standard solution to the equation \eqref{eq-2}.
\end{proof}

\begin{exs}\label{P-3-6}
\begin{align*}
2R_0 & = 1, \\[1pt]
4R_2 & = -u, \\[1pt]
16R_4 & = 3u^2-u'', \\[1pt]
64R_6 & = -10u^3+10uu''+5(u')^2-u^{(4)}.
\end{align*}
\end{exs}

\begin{defn}
The recursion defined by Equation \eqref{eq-5} is called the \emph{Gelfand-Dikii recursion} (or the \emph{GD-recursion} for short).
\end{defn}

Let's put $\alpha(z) = 1 + \sum\limits_{k\geqslant 1}\alpha_{2k}z^{-2k}$. Then, according to the paper \cite{G-D-79},
the polynomials $R_{c,2k}$, $k=1,2,\dots$, obtained by the GD-recursion, can be expressed in terms of the polynomials
$R_{2k}$, $k=0,1,2,\dots$ of the standard solution of Equation \ref{eq-2} by the formula
\begin{equation}\label{f-R}
R_{c,2k+2} = R_{2k+2} + \sum_{i=1}^k\alpha_{2i}R_{2k+2-2i}.
\end{equation}
From Equations \eqref{f-P} and \eqref{f-R}, we obtain an important corollary.

\begin{cor}
The coefficients $\alpha_{2i}$, $i=0,1,\dots$, at $R_{2k-2i}$ in the GD-recursion do not depend on $k$,
and the coefficients $\alpha_{k+1,i}$, $i=0,1,\dots$, at $P_{k-1}^0$ in the $(P,Q)$-recursion depend on $k$.
\end{cor}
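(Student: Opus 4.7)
The plan is to read both halves of the corollary directly off the two expansion formulas \eqref{f-R} and \eqref{f-P} that immediately precede it, supplemented by one numerical check from Example~\ref{P-3-2} to rule out accidental cancellation of $k$-dependence on the $(P,Q)$-side.

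For the first assertion I would recall that, by the preceding discussion, $R_c = \alpha(z) R$ with $\alpha(z) = 1 + \sum_{i\geq 1}\alpha_{2i} z^{-2i}$ determined by $\alpha(z)^2 = c(z)$. Thus the $\alpha_{2i}$ are nothing but the Taylor coefficients of the formal square root of the fixed series $c(z)$, determined once and for all. Formula \eqref{f-R} simply exhibits $\alpha_{2i}$ as the coefficient of $R_{2k+2-2i}$ in $R_{c,2k+2}$, and the truncation level $k$ never enters the description.

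For the second assertion, Theorem~\ref{T-B1} writes $P_{k+1} = P_{k+1}^0 + \sum_{i=0}^{k}\alpha_{k+1,i}(\mathbf{h})P_{k-i}^0$, where the coefficients are double-indexed by $k+1$ and $i$. To verify that this $k$-dependence is genuine rather than cosmetic, I would extract from Example~\ref{P-3-2} the leading shifted coefficients
\[
\alpha_{2,0} = -h_1, \qquad \alpha_{3,0} = -2h_1, \qquad \alpha_{4,0} = -3h_1,
\]
so that at fixed shift $i=0$ the value $\alpha_{k+1,0} = -kh_1$ strictly depends on $k$. The same phenomenon recurs at $i=1$, where $\alpha_{3,1} = h_1^2 + \tfrac{1}{2}h_2$ differs from $\alpha_{4,1} = 3h_1^2 + \tfrac{1}{2}h_2$, completing the verification.

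The only real obstacle is conceptual rather than computational: the content of the corollary is the contrast between the two recursions. In the GD case the general solution is obtained from the standard one by multiplication with a single fixed formal series $\alpha(z)$, whose coefficients cannot depend on the truncation level; in the $(P,Q)$ case no such universal transition series exists, and Example~\ref{P-3-2} certifies that the coefficients in \eqref{f-P} genuinely carry both indices $k$ and $i$.
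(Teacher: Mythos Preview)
Your proposal is correct and follows essentially the same route as the paper, which simply states that the corollary follows from formulas \eqref{f-P} and \eqref{f-R} without further argument. Your explicit numerical check from Example~\ref{P-3-2} is in fact a worthwhile addition, since \eqref{f-P} alone only shows that the coefficients are \emph{labelled} by $k$, not that they genuinely vary with it.
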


Comparing the formulas from Example \ref{P-3-2}, where $\mathbf{h}=0$, to the formulas from Example \ref{P-3-6},
we obtain that the polynomials $P_k^0$ and $R_{2k}$ are related by
\[
P_k^0(-\frac{1}{2}u) = 2R_{2k}(u),\; k=1,2,3.
\]
\begin{thm}[P. Барон, see \cite{B-23}]\label{T-B2}
For any $k=1,2,\ldots$
\[
P_k^0(-\frac{1}{2}u) = 2R_{2k}(u).
\]
\end{thm}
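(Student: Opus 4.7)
The plan is to package the sequences $\{\tilde{P}_k\}_{k \ge 0}$ and $\{2R_{2k}\}_{k \ge 0}$ as coefficients of formal power series in a dummy variable $\zeta$ and to show that both series satisfy one and the same quadratic differential equation in $t_1$, so that a routine uniqueness argument forces coefficient-wise equality. Adopting the convention $P_0 = Q_0 = 1$ from Theorem \ref{T-B1}, set $\tilde{P}_k(u) := P_k^0(-u/2)$, $\tilde{Q}_k(u) := Q_k^0(-u/2)$, and introduce
\[
\mathbb{P}(\zeta) = \sum_{k \ge 0}\tilde{P}_k\zeta^k, \qquad \mathbb{Q}(\zeta) = \sum_{k \ge 0}\tilde{Q}_k\zeta^k.
\]

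The first substantive step is to extract from the $(P,Q)$-recursion with $\mathbf{h}=0$ two compact identities. Subtracting \eqref{f-24} from \eqref{f-23} gives a linear identity $P_k - Q_k = \tfrac{1}{2}P_{k-1}'' + 2u_1 P_{k-1}$ valid for all $k \ge 1$, while adding them gives a quadratic identity $P_k + Q_k = -\sum_{i+j=k}P_iQ_j - \tfrac{1}{4}\sum_{i+j=k-1}P_i'P_j'$ valid for $k \ge 2$ (with $i,j \ge 1$ in both sums). Converting to generating series under $u_1 \mapsto -u/2$ and using $\tilde{P}_0 = 1$, $\tilde{P}_0' = 0$ to extend the summations to $i, j \ge 0$, these read
\[
\mathbb{P} - \mathbb{Q} = \tfrac{\zeta}{2}\mathbb{P}'' - u\zeta\mathbb{P}, \qquad \mathbb{P}\mathbb{Q} + \tfrac{\zeta}{4}(\mathbb{P}')^2 = 1.
\]
Eliminating $\mathbb{Q}$ from the first identity and substituting into the second produces the single equation
\[
(1 + u\zeta)\mathbb{P}^2 - \tfrac{\zeta}{2}\mathbb{P}\mathbb{P}'' + \tfrac{\zeta}{4}(\mathbb{P}')^2 = 1. \qquad (\star)
\]

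On the Gelfand--Dikii side, writing $\zeta = z^{-2}$ and $\mathbb{P}_{\mathrm{GD}}(u,\zeta) := 2zR = 1 + 2\sum_{k \ge 1}R_{2k}\zeta^k$, the defining equation \eqref{eq-2} with $c(z)\equiv 1$, namely $-2RR'' + (R')^2 + 4(u+z^2)R^2 = 1$, rewrites after multiplication by $z^2 = \zeta^{-1}$ as precisely $(\star)$ for $\mathbb{P}_{\mathrm{GD}}$, with the same initial value $\mathbb{P}_{\mathrm{GD}}|_{\zeta = 0} = 1$.

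To finish, I would invoke the following uniqueness: in $\mathbb{C}[u, u', u'', \ldots][[\zeta]]$ the equation $(\star)$ has at most one solution with constant term $1$. Indeed, writing $\mathbb{P} = 1 + \sum_{k \ge 1} a_k \zeta^k$, the coefficient of $\zeta^k$ in $(\star)$ takes the form $2a_k + F_k(a_1, \ldots, a_{k-1}, a_1', \ldots, a_{k-1}'')$, where $F_k$ is a universal differential polynomial in the previously determined coefficients; thus $a_k$ is forced inductively. Since $\mathbb{P}$ and $\mathbb{P}_{\mathrm{GD}}$ both satisfy $(\star)$ with the same initial condition, they coincide, giving $\tilde{P}_k = 2R_{2k}$ for every $k \ge 1$. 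The principal obstacle is the careful bookkeeping when passing the two recursive identities into generating-series form, particularly verifying that the cross terms with $i = 0$ or $j = 0$ in $\mathbb{P}\mathbb{Q}$ cancel the lone $\tilde{P}_k + \tilde{Q}_k$ contribution at each order $k \ge 2$; once that is done, the rest is algebraic manipulation and a standard formal-power-series uniqueness argument.
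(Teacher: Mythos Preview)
The paper does not present its own proof of this theorem: it attributes the result to P.~Baron and defers the argument to the companion paper~\cite{B-23} (see also the Acknowledgments). Consequently there is no in-paper proof to compare against.

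On its own merits, your proposal is correct. The two identities you extract from the $(P,Q)$-recursion at $\mathbf{h}=0$ are valid---the linear one for all $k\ge 1$ once $P_0=Q_0=1$, and the quadratic one in fact already for $k\ge 1$ (the case $k=1$ reads $P_1^0+Q_1^0=0$, which is immediate). Your elimination of $\mathbb{Q}$ cleanly produces equation~$(\star)$, and the substitution $\zeta=z^{-2}$, $\mathbb{P}_{\mathrm{GD}}=2zR$ does transform~\eqref{eq-2} with $c(z)\equiv 1$ into exactly the same equation. The triangular uniqueness argument in $\mathbb{C}[u,u',\dots][[\zeta]]$ is standard and sound: the $\zeta^k$-coefficient of $(\star)$ indeed isolates $2a_k$ against a differential polynomial in $a_1,\dots,a_{k-1}$.

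One remark: you have tacitly repaired a typo in the paper. In~\eqref{f-23}--\eqref{f-24} (and already in Corollary~\ref{C-3}) the term written as $\sum_{i+j=k-1}P_i''P_j''$ must be $\sum_{i+j=k-1}P_i'P_j'$, since it comes from $\sum v_iv_j$ with $v_i=-\tfrac12 u_i'$; the grading also confirms this ($P_i'P_j'$ has weight $2k$, whereas $P_i''P_j''$ would have weight $2k+2$). Your version with first derivatives is the correct one, and the argument depends on it.
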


\section{Korteweg--de Vries hierarchy}
Consider the graded differential ring of polynomials
$$
\mathcal{A} = \bigoplus_{k\geqslant 0}\mathcal{A}_k = \mathbb{C}[u,u',\dots,u^{(k)},\dots].
$$
Suppose that there exists an infinite sequence of homogeneous commutative differentiations $\partial_1=\partial$, $\partial_{2k-1}$, $k=2,3,\dots$, $|\partial_{2k-1}| = 2k-1$, in the ring $\mathcal{A}$.

\begin{lem}\label{L-4-1}
Let $\{C_{2k}^1\}$ and $\{C_{2k}^2\}$ be two sequences of homogeneous polynomials from the ring $\mathcal{A}$,
where $|C_{2k}^1| = |C_{2k}^2|= 2k$, $k=1,2,\dots$, and $C_2^1 = C_2^2$. If
\[
\partial C_{2k+2}^i = \Lambda\partial C_{2k}^i,\qquad i=1,2,
\]
where $\Lambda = \frac{1}{4}\partial^2-u-\frac{1}{2}u'\partial^{-1}$ is the Lenard operator,
then  $C_{2k}^1 = C_{2k}^2$ for all $k>1$.
\end{lem}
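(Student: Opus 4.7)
The plan is to work with the difference $D_{2k} := C_{2k}^1 - C_{2k}^2$. Since $\partial$ and $\Lambda$ are linear, the sequence $\{D_{2k}\}$ inherits the recursion $\partial D_{2k+2} = \Lambda \partial D_{2k}$, is homogeneous of degree $2k$, and satisfies $D_2 = 0$ by hypothesis. The goal reduces to showing $D_{2k} = 0$ for every $k \geqslant 1$, which I would prove by induction on $k$, the base case $k=1$ being given.

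For the inductive step, assume $D_{2k} = 0$. Then $\partial D_{2k} = 0$, and applying the Lenard operator term by term gives
\[
\Lambda \partial D_{2k} = \tfrac{1}{4}\partial^{3} D_{2k} - u\, \partial D_{2k} - \tfrac{1}{2} u'\, \partial^{-1}(\partial D_{2k}) = 0,
\]
where the last summand is handled by the observation that $\partial^{-1}$ is applied only to an element in the image of $\partial$ (namely $\partial D_{2k} = \partial 0$), so it returns $0$. The recursion then yields $\partial D_{2k+2} = 0$, i.e.\ $D_{2k+2} \in \ker \partial$.

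To close the argument, I would invoke the elementary fact that in the polynomial ring $\mathcal{A} = \mathbb{C}[u, u', u'', \ldots]$ the kernel of the derivation $\partial$ is exactly the constants $\mathbb{C}$. Since $D_{2k+2}$ is homogeneous of strictly positive degree $2k+2$ while $\mathbb{C}$ sits in degree $0$, it must vanish. This completes the induction and gives $C_{2k}^1 = C_{2k}^2$ for all $k$.

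The only subtle point that I would take care to address is the meaning of $\partial^{-1}$ inside the Lenard operator: it is well-defined only on the subspace $\partial \mathcal{A}$, and even there only up to an additive constant. In the argument above the constant is irrelevant because $\partial D_{2k} = 0$, but more generally it is exactly the homogeneity of the $C_{2k}^i$ in strictly positive degree that removes the ambiguity. Thus the positive-grading hypothesis, together with the matching initial condition $C_2^1=C_2^2$, is what turns the formal recursion into a bona fide uniqueness statement; this is the step I would expect a referee to scrutinize most carefully.
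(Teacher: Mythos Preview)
Your proof is correct and follows essentially the same inductive route as the paper: both arguments reduce to the fact that $\partial$ is injective on $\mathcal{A}_k$ for $k>0$ (equivalently, $\ker\partial=\mathbb{C}$), so that $\partial D_{2k+2}=0$ forces $D_{2k+2}=0$. The only cosmetic difference is that the paper sidesteps your discussion of the $\partial^{-1}$ ambiguity by observing directly that $\Lambda\partial=\tfrac14\partial^3-u\partial-\tfrac12 u'$ is a bona fide differential operator on $\mathcal{A}$, so no antiderivative ever needs to be chosen.
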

\begin{proof}
The homogeneous differential operator  $\Lambda\partial = \frac{1}{4}\partial^3-u\partial-\frac{1}{2}u'$, $|\Lambda\partial| = 3$,
is well-defined in the ring $\mathcal{A}$.
It can be easily shown that, in the ring $\mathcal{A} = \bigoplus_{k\geqslant 0}\mathcal{A}_k$ the operator $\partial$
defines \emph{monomorphisms} $\partial\colon \mathcal{A}_k \to \mathcal{A}_{k+1}$, $k>0$.
Therefore, the Lenard operator $\Lambda$ is well-defined and unique on $\rm\;{Im}\partial$.
Using the condition $C_2^1 = C_2^2$, where $|C_2^1|=2$, and induction on $k$, we obtain the proof.
\end{proof}

\begin{defn}
The \emph{Korteweg-de Vries hierarchy} is an infinite sequence of compatible equations
\begin{equation*}
\partial_{2k-1}U_2 = \partial U_{2k},\qquad k=1,2,\dots,
\end{equation*}
where $U_{2k}$ are differential polynomials of $U_2$ related by
\begin{equation}\label{F-2}
\partial U_{2k+2} = \Lambda\partial U_{2k},\qquad k=1,2,\dots,
\end{equation}
$\Lambda$ is the Lenard operator, and $\partial_{2k-1}$ are commuting differentiations of the ring $\mathcal{A}$.
\end{defn}

Set $U_2 = 4u$ and $U_0 = -8$. Then Equation \eqref{F-2} holds for $k=0$ as well.

\begin{ex}
For $k=1$ and $U_2 = 4u$, we obtain the classical Korteweg--de Vries equation
\begin{equation*}
4\partial_3 u = u''' - 3(u^2)'.
\end{equation*}
\end{ex}

The general solution of the equation $\partial_3U_2 = \partial U_4$ under the condition $\partial U_4 = \Lambda\partial U_2$ with $U_2=4u$ has the form
\begin{equation*}
U_4 = u'' - 3u^2 + \alpha_1u + \alpha_2,
\end{equation*}
where $\alpha_1, \alpha_2 \in \mathbb{C}$, meaning it is not homogeneous. A homogeneous solution would be
\[
U_4 = u'' - 3u^2.
\]

\begin{defn}
A solution $\{U_{2k}^0,\,k\geqslant 1\}$ of the KdV hierarchy in which all polynomials $U_{2k}^0\in \mathcal{A}$ are homogeneous,
$|U_{2k}^0| = 2k$, and $U_2=4u$, is called \emph{special}.
\end{defn}

From Theorems \ref{T-3-4} and \ref{T-B2}, we obtain

\begin{thm}\label{T-spets}
The sequence of homogeneous polynomials $\{P_k^0,\; k\geqslant 1\}$ defined by the $(P,Q)$-recursion with the zero
vector of parameters $\mathbf{h}$ gives a special solution $\{\mathcal{U}_{2k}^0\}$ of the KdV hierarchy, where
\begin{equation}\label{F-5}
\mathcal{U}_{2k}^0 = -8P_k^0(-u/2),\qquad k=1,2,\dots\,.
\end{equation}
\end{thm}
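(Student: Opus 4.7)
The plan is to reduce Theorem \ref{T-spets} to the two previously established inputs. Theorem \ref{T-B2} identifies $P_k^0(-u/2)$ with $2R_{2k}(u)$, and Theorem \ref{T-3-4} provides the commuting flows of the $(P,Q)$-recursion. Combined with the Lenard relation \eqref{eq-4} for the Gelfand--Dikii polynomials, these supply every ingredient in the definition of a special solution.

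First I would translate the candidate solution into Gelfand--Dikii data: by Theorem \ref{T-B2},
\[
\mathcal{U}_{2k}^0 = -8 P_k^0(-u/2) = -16 R_{2k}(u), \qquad k\geqslant 1.
\]
Substituting the initial value $R_2 = -u/4$ from Example \ref{P-3-6} gives $\mathcal{U}_2^0 = 4u$, the normalization required of a special solution. Homogeneity is equally immediate: the coefficients $R_{2k}$ were declared homogeneous of degree $2k$ in the graded ring $\mathcal{A}$ (in the paragraph preceding Example \ref{P-3-6}), and scaling preserves degree, so $\mathcal{U}_{2k}^0 \in \mathcal{A}_{2k}$.

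Next I would derive the Lenard recursion \eqref{F-2} by multiplying \eqref{eq-4} through by $-16$:
\[
\partial \mathcal{U}_{2k+2}^0 = -16\,\partial R_{2k+2} = -16\,\Lambda \partial R_{2k} = \Lambda\,\partial \mathcal{U}_{2k}^0.
\]
For the compatible commuting flows I would invoke Theorem \ref{T-3-4} with $\mathbf{h}=0$, which yields commuting derivations $\partial_k$ on $\mathbb{C}[u_1, u_1', \ldots]$ satisfying $\partial_k P_1^0 = \partial_1 P_k^0$. Degree counting in the grading $|u_1|=2$, $|\partial_1|=1$ gives $|\partial_k| = |P_k^0|-|P_1^0|+|\partial_1| = 2k-1$. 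The substitution $u_1 \mapsto -u/2$ is a graded ring isomorphism $\mathbb{C}[u_1, u_1', \ldots] \to \mathcal{A}$ transporting $\partial_1$ to $\partial$; I define $\partial_{2k-1}$ on $\mathcal{A}$ to be the image of $\partial_k$ under this isomorphism. Commutativity of the new derivations is inherited from that of the $\partial_k$, and the identity $\partial_k P_1^0 = \partial_1 P_k^0$ transforms, after multiplication by $-8$ and the substitution $u_1=-u/2$, into $\partial_{2k-1}\mathcal{U}_2^0 = \partial \mathcal{U}_{2k}^0$. Together with the Lenard recursion this is precisely the definition of a special solution.

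The entire argument is essentially bookkeeping once the cited theorems are available; the only delicate point is aligning the sign, normalization, and grading conventions of the $(P,Q)$- and Gelfand--Dikii recursions, which is exactly what the factor $-8$ and the argument $-u/2$ in formula \eqref{F-5} accomplish. I therefore anticipate no substantive obstacle beyond this verification.
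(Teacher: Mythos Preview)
Your argument is correct and follows precisely the route indicated by the paper, which derives Theorem~\ref{T-spets} directly from Theorems~\ref{T-3-4} and~\ref{T-B2} without further detail. You have simply filled in the bookkeeping the paper leaves to the reader: the translation $\mathcal{U}_{2k}^0=-16R_{2k}$ via Theorem~\ref{T-B2}, the Lenard relation from \eqref{eq-4}, and the transport of the commuting derivations of Theorem~\ref{T-3-4} through the graded isomorphism $u_1\mapsto -u/2$.
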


From Lemma \ref{L-4-1}, we obtain
\begin{cor}
The special solution of the KdV hierarchy is uniquely defined and determined by polynomials of the form \eqref{F-5}.
\end{cor}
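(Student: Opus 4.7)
The plan is to combine Lemma \ref{L-4-1} with Theorem \ref{T-spets}: the lemma supplies uniqueness, and the theorem supplies both existence and the explicit expression \eqref{F-5}. So the argument reduces to verifying that any two special solutions satisfy the hypotheses of Lemma \ref{L-4-1}.

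For uniqueness, I would start with two special solutions $\{U_{2k}^0\}$ and $\{\widetilde{U}_{2k}^0\}$ of the KdV hierarchy. By the definition of \emph{special}, both sequences consist of homogeneous polynomials with $|U_{2k}^0| = |\widetilde{U}_{2k}^0| = 2k$, and they agree at the first step, $U_2^0 = \widetilde{U}_2^0 = 4u$. By the definition of the KdV hierarchy, both satisfy the Lenard recurrence $\partial U_{2k+2}^0 = \Lambda\partial U_{2k}^0$. Setting $C_{2k}^1 = U_{2k}^0$ and $C_{2k}^2 = \widetilde{U}_{2k}^0$, the hypotheses of Lemma \ref{L-4-1} are met, so $U_{2k}^0 = \widetilde{U}_{2k}^0$ for every $k\geqslant 1$. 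For existence and the concrete form, Theorem \ref{T-spets} already asserts that $\mathcal{U}_{2k}^0 = -8P_k^0(-u/2)$ defines a special solution; combined with the uniqueness just shown, this forces the (unique) special solution to be given by \eqref{F-5}.

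The only delicate point is ensuring that the homogeneity hypothesis is indispensable, and not a convenient assumption that could be relaxed. Without it, the Lenard recurrence $\partial U_{2k+2} = \Lambda\partial U_{2k}$ determines $U_{2k+2}$ only up to an element of $\ker\partial$, i.e.\ up to an additive constant, so one could produce many inequivalent sequences (as illustrated in the excerpt by $U_4 = u''-3u^2+\alpha_1 u + \alpha_2$). The condition $|U_{2k+2}|=2k+2>0$ excludes nonzero constants, which have degree zero, and this is precisely what makes $\partial$ a monomorphism on the relevant graded piece in the proof of Lemma \ref{L-4-1}. No technical obstacle remains beyond this observation; the corollary is essentially a direct application of the two cited results.
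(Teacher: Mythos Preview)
Your proposal is correct and follows exactly the route the paper takes: the paper derives the corollary immediately ``From Lemma \ref{L-4-1}'' (together with Theorem \ref{T-spets} for existence), and your write-up simply spells out the verification that two special solutions satisfy the hypotheses of that lemma. Your additional remark on why homogeneity is essential is accurate and consistent with the proof of Lemma \ref{L-4-1}.
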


\begin{cor}
Let $\{C_{2k},\; k=1,2,\dots\}$ be a sequence of homogeneous polynomials in $\mathcal{A}$, where $|C_{2k}|= 2k$, and let
$\{\partial^C_{2k-1}$, $k=1,2,\dots\}$ be a sequence of homogeneous differentiations of the ring $\mathcal{A}$,
each of which commutes with the operator $\partial_1^C$. Then, if $C_2=4u$ and
\[
\partial_{2k-1}^C C_2 = \partial C_{2k},\quad \partial C_{2k+2} = \Lambda\partial C_{2k},\qquad k=1,2,\dots,
\]
then the operators $\partial_{2k-1}^C$ pairwise commute, and $C_{2k} = \mathcal{U}_{2k}^0$.
\end{cor}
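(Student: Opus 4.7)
The statement packages two claims: that $C_{2k}=\mathcal{U}_{2k}^0$ for all $k\geqslant 1$, and that the derivations $\partial_{2k-1}^C$ pairwise commute. The plan is to obtain the first claim from Lemma~\ref{L-4-1} by comparing $\{C_{2k}\}$ against the special solution $\{\mathcal{U}_{2k}^0\}$, and then to obtain the second by showing that each $\partial_{2k-1}^C$ must coincide with the KdV hierarchy derivation $\partial_{2k-1}$ already supplied by Theorem~\ref{T-3-4}.

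For the identification of the polynomials, I would verify that both sequences $\{C_{2k}\}$ and $\{\mathcal{U}_{2k}^0\}$ satisfy the three hypotheses of Lemma~\ref{L-4-1}. Homogeneity $|C_{2k}|=2k$ is given by hypothesis for the first sequence and is built into the definition of a special solution for the second. The initial terms agree: $C_2=4u=\mathcal{U}_2^0$. The Lenard recursion $\partial C_{2k+2}=\Lambda\partial C_{2k}$ is a hypothesis for the first sequence, and holds for the second by \eqref{F-2} since $\{\mathcal{U}_{2k}^0\}$ is a KdV hierarchy solution (Theorem~\ref{T-spets}). Lemma~\ref{L-4-1} then forces $C_{2k}=\mathcal{U}_{2k}^0$ for every $k\geqslant 1$.

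For the derivations, the key observation is elementary: a derivation $D$ of the polynomial ring $\mathcal{A}=\mathbb{C}[u,u',u'',\dots]$ that commutes with $\partial$ is uniquely determined by the single value $Du\in\mathcal{A}$, since $Du^{(i)}=\partial^i(Du)$ for every $i\geqslant 0$. Applied to $D=\partial_{2k-1}^C$, which by hypothesis commutes with $\partial_1^C=\partial$, the relation $\partial_{2k-1}^C C_2=\partial C_{2k}$ together with $C_2=4u$ yields
\[
\partial_{2k-1}^C u \;=\; \tfrac{1}{4}\partial C_{2k} \;=\; \tfrac{1}{4}\partial \mathcal{U}_{2k}^0.
\]
This is precisely the action on $u$ of the KdV hierarchy derivation $\partial_{2k-1}$ delivered by the $(P,Q)$-recursion via Theorems~\ref{T-3-4} and~\ref{T-spets}. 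The uniqueness principle above therefore gives $\partial_{2k-1}^C=\partial_{2k-1}$ as derivations of $\mathcal{A}$, and the pairwise commutation asserted in Theorem~\ref{T-3-4} transfers verbatim to the family $\{\partial_{2k-1}^C\}$.

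I do not anticipate a substantive obstacle: the two pieces of real content, uniqueness of the Lenard recursion from its initial term and commutativity of the KdV flows, have already been packaged into Lemma~\ref{L-4-1} and Theorem~\ref{T-3-4} respectively. The only point that requires explicit mention is the lemma-level assertion that a $\partial$-commuting derivation of $\mathcal{A}$ is determined by its value on $u$, which is immediate from $\mathcal{A}$ being the polynomial ring on the $\partial$-orbit of $u$.
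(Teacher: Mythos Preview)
Your proposal is correct and follows essentially the same route as the paper: Lemma~\ref{L-4-1} identifies $C_{2k}$ with $\mathcal{U}_{2k}^0$, the observation that a $\partial$-commuting derivation of $\mathcal{A}$ is determined by its value on $u$ forces $\partial_{2k-1}^C=\partial_{2k-1}$, and commutativity then comes from the known KdV flows. The paper's proof is terser but makes the same moves in the same order.
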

\begin{proof}
The equation $\partial_{2k-1}^C C_2 = \partial C_{2k}$ and the condition $C_2=4u$ uniquely determine homogeneous differentiations $\partial_{2k-1}^C$
of the ring $\mathcal{A}$, with condition that they commute with the operator $\partial$.
Since $\partial C_{2k+2} = \Lambda\partial C_{2k}$, then according to Lemma \ref{L-4-1}, we obtain $C_{2k} = \mathcal{U}_{2k}^0$
and $\partial_{2k-1}^C = \partial_{2k-1}$. Therefore, the operators $\partial_{2k-1}^C$ commute for all $k$.\qed
\end{proof}

Using Theorem \ref{T-B2}, we odtain
\begin{cor}
The standard solution $\{R_{2k}\}$ of the GD-recursion defines a special solution of the KdV hierarchy
by the formula $\mathcal{U}_{2k}^0 = -16R_{2k}$, $k=1,2,\dots$.
\end{cor}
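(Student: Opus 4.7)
The plan is to combine Theorem \ref{T-spets} with Theorem \ref{T-B2} in a single substitution. Theorem \ref{T-spets} tells us that the homogeneous polynomials $\{P_k^0,\; k\geqslant 1\}$ produced by the $(P,Q)$-recursion with zero parameter vector $\mathbf{h}$ generate a special solution $\{\mathcal{U}_{2k}^0\}$ of the KdV hierarchy via the formula $\mathcal{U}_{2k}^0 = -8P_k^0(-u/2)$. Theorem \ref{T-B2} identifies $P_k^0(-u/2) = 2R_{2k}(u)$ for every $k\geqslant 1$, where $\{R_{2k}\}$ is the standard solution of the Gelfand--Dikii recursion. Substituting the second identity into the first and simplifying $-8\cdot 2 = -16$ yields $\mathcal{U}_{2k}^0 = -16R_{2k}$, which is the claim.

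As a consistency check before concluding, I would verify the formula at $k=1$ using the data in Examples \ref{P-3-6}: since $R_2 = -u/4$, we get $-16R_2 = 4u$, matching the normalization $U_2 = 4u$ fixed in the definition of the KdV hierarchy. Appealing additionally to the uniqueness corollary following Theorem \ref{T-spets}, the sequence $\{-16R_{2k}\}$ cannot deviate from the special solution at higher orders once the leading term agrees and the Lenard recursion \eqref{F-2} is satisfied; and the latter is guaranteed by \eqref{eq-4}, which is the GD-side incarnation of the Lenard recursion.

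There is essentially no obstacle here. All the nontrivial content — that the zero-parameter $(P,Q)$-recursion actually yields a \emph{special} solution of the KdV hierarchy, and that its polynomials coincide with those of the GD standard solution under the rescaling $u\mapsto -u/2$ up to a factor of $2$ — has already been absorbed into Theorems \ref{T-spets} and \ref{T-B2} (the latter proved in \cite{B-23}). The corollary is simply the translation of the unique special KdV solution from the $(P,Q)$-language into Gelfand--Dikii language.
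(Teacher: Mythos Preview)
Your proof is correct and matches the paper's own argument exactly: the paper simply prefaces the corollary with ``Using Theorem \ref{T-B2}, we obtain'', meaning one substitutes $P_k^0(-u/2) = 2R_{2k}(u)$ from Theorem \ref{T-B2} into $\mathcal{U}_{2k}^0 = -8P_k^0(-u/2)$ from Theorem \ref{T-spets}. Your additional consistency check at $k=1$ and the appeal to uniqueness are not needed but do no harm.
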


This Corollary provides a new proof of a well-known result by Gelfand--Dikii, see \cite{G-D-79}.

Consider an infinite sequence $\{\alpha_{2k}$, $k=1,2,\dots\}$ of graded parameters with $|\alpha_{2k}|= 2k$, and set $\alpha_0=1$.
Denote by $\mathcal{A}^\alpha$ the graded ring $\mathcal{A}\otimes \mathbb{C}[\alpha_2,\alpha_4,\dots] = \bigoplus_{k\geqslant 0}\mathcal{A}_k^\alpha$,
where $\mathcal{A}_k^\alpha$ is a finite-dimensional space with a basis of monomials with grading $k$.

An infinite family of homogeneous commutative differentiations $\partial_{2k-1}$ of the ring $\mathcal{A}$ defines an infinite family of
homogeneous commutative differentiations $\{D_{2k-1}^\alpha\}$ of the ring $\mathcal{A}^\alpha$ where $D_1 = \partial$, and
\[
D_{2k-1}^\alpha = \partial_{2k-1}^0 + \sum_{i=1}^k \alpha_{2i}\partial_{2k-2i+1}^0,\qquad k>1.
\]

For example, $D_3^\alpha = \partial_3^0+\alpha_2\partial_1^0$ and $D_5^\alpha = \partial_5^0+\alpha_2\partial_3^0+\alpha_4\partial_1^0$.

Let's introduce an infinite family of homogeneous polynomials $\{\mathcal{U}_{2k}^\alpha\}$ in $\mathcal{A}^\alpha$, where $|\mathcal{U}_{2k}^\alpha|= 2k$,
\[
\mathcal{U}_{2k}^\alpha = \mathcal{U}_{2k}^0 + \sum_{i=1}^{2k} \alpha_{2i}\mathcal{U}_{2k-2i}^0,
\]
$\mathcal{U}_0^0 = -8$, and $\{\mathcal{U}_{2k}^0 \}$, $k=1,2,\dots$, is a sequence that defines a special solution
of the KdV hierarchy with respect to the operators $\partial_{2k-1}$, $k=1,2,\dots$.

For example, $\mathcal{U}_2^\alpha = \mathcal{U}_2^0+\alpha_2 = 4u+\alpha_2$ and $\mathcal{U}_4^\alpha = \mathcal{U}_4^0+\alpha_2\mathcal{U}_2^0+\alpha_4$.

\begin{thm}\label{T-7-9}
Let $\{U_{2k}\}$, $k=1,2,\dots$, be a general solution of the Korteweg--de Vries hierarchy where $U_2=4u$ and $U_0=-8$.
Then there exists the unique defined ring homomorphism $\pi \colon \mathbb{C}[\alpha_2,\alpha_4,\dots]\to \mathbb{C}$
that induces a ring homomorphism $\pi \colon \mathcal{A}^\alpha \to \mathcal{A}$ such that
\[
\pi_*D_{2k-1}^\alpha = \partial_{2k-1}\quad\text{and}\quad\pi_*\mathcal{U}_{2k}^\alpha = U_{2k}.
\]
\end{thm}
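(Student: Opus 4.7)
The plan is to construct $\pi$ by inductively specifying its values on the generators $\alpha_{2k}$, so that $\pi_*\mathcal{U}_{2k}^\alpha=U_{2k}$; the derivation identity $\pi_*D_{2k-1}^\alpha=\partial_{2k-1}$ will then follow automatically. For the base case $k=1$, the formula for $\mathcal{U}_2^\alpha$ reads $4u$ plus a nonzero scalar multiple of $\alpha_2$, so the hypothesis $U_2=4u$ forces $\pi(\alpha_2)$ to be the unique value canceling that contribution.

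For the inductive step, assume $\pi(\alpha_{2i})\in\mathbb{C}$ has been chosen for $i<k$ so that $\pi_*\mathcal{U}_{2i}^\alpha=U_{2i}$, and set
\[
V_k:=\mathcal{U}_{2k}^0+\sum_{i=1}^{k-1}\pi(\alpha_{2i})\,\mathcal{U}_{2k-2i}^0,
\]
the $\alpha_{2k}$-free part of $\pi_*\mathcal{U}_{2k}^\alpha$. I must show that $U_{2k}-V_k\in\mathbb{C}$; the equation $V_k+\pi(\alpha_{2k})\mathcal{U}_0^0=U_{2k}$ then uniquely determines $\pi(\alpha_{2k})$. Applying $\partial$ to $V_k$ term by term and using $\partial\mathcal{U}_{2j}^0=\Lambda\partial\mathcal{U}_{2j-2}^0$ inside every summand converts $\partial V_k$ into $\Lambda\partial V_{k-1}+\pi(\alpha_{2k-2})\partial\mathcal{U}_2^0$, where the leftover term at $i=k-1$ survives because the Lenard shift cannot reduce $\partial\mathcal{U}_2^0$ further. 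On the other side, $\partial U_{2k}=\Lambda\partial U_{2k-2}$ and the inductive hypothesis give $U_{2k-2}=V_{k-1}+\pi(\alpha_{2k-2})\mathcal{U}_0^0$; the action of $\Lambda\partial$ on the scalar $\pi(\alpha_{2k-2})\mathcal{U}_0^0$ equals $\pi(\alpha_{2k-2})(-\mathcal{U}_0^0/2)\,u'=4\pi(\alpha_{2k-2})u'=\pi(\alpha_{2k-2})\partial\mathcal{U}_2^0$, using $\Lambda\partial c=-\tfrac12 cu'$ on constants $c$ (which uses $\partial^{-1}\partial c=c$) together with $\mathcal{U}_0^0=-8$ and $\partial\mathcal{U}_2^0=4u'$. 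Thus $\partial V_k=\partial U_{2k}$, and since $\ker\partial\cap\mathcal{A}=\mathbb{C}$ the difference lies in $\mathbb{C}$, as required.

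For the derivation identity, the relation $D_{2k-1}^\alpha\mathcal{U}_2^\alpha=\partial\mathcal{U}_{2k}^\alpha$ holds in $\mathcal{A}^\alpha$ as an immediate consequence of the defining formula for $D_{2k-1}^\alpha$ and the flow relations $\partial_{2j-1}^0(4u)=\partial\mathcal{U}_{2j}^0$ satisfied by the special solution. Applying $\pi_*$ and using $\pi_*\mathcal{U}_{2k}^\alpha=U_{2k}$ gives $(\pi_*D_{2k-1}^\alpha)(4u)=\partial U_{2k}=\partial_{2k-1}(4u)$. Both $\pi_*D_{2k-1}^\alpha$ and $\partial_{2k-1}$ are $\mathbb{C}$-linear derivations of $\mathcal{A}$ commuting with $\partial$, so agreement on the generator $u$ propagates through $\partial$ and the ring structure to all of $\mathcal{A}$. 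Uniqueness of $\pi$ is built into the induction, since at each stage $\pi(\alpha_{2k})$ is determined uniquely. The main obstacle is the cancellation in the inductive step: getting the Lenard recursion on the special solution and the $\Lambda\partial c$ action on constants to conspire so that $\partial V_k=\partial U_{2k}$; once that is in hand, the rest is formal.
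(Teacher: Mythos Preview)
Your proof is correct and fills in precisely the details the paper omits: the paper's own argument is the single line ``directly follows from the previous results,'' and your inductive construction---using the Lenard recursion $\partial\mathcal{U}_{2j}^0=\Lambda\partial\mathcal{U}_{2j-2}^0$ for the special solution together with the injectivity of $\partial$ on $\bigoplus_{k>0}\mathcal{A}_k$ from Lemma~\ref{L-4-1}---is exactly what that line is pointing to. One small stylistic remark: your parenthetical ``which uses $\partial^{-1}\partial c=c$'' is unnecessary and a bit misleading, since (as the paper notes in the proof of Lemma~\ref{L-4-1}) the operator $\Lambda\partial=\tfrac14\partial^3-u\partial-\tfrac12 u'$ is already a genuine differential operator and gives $\Lambda\partial c=-\tfrac12 cu'$ on constants without any appeal to $\partial^{-1}$.
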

\textit{Proof} directly follows from the previous results.

\begin{cor}
The sequence of homogeneous polynomials $\{P_{k},k\geqslant 1\}$ defined by the $(P,Q)$-recursion with $\mathbf{h}=(h_1,h_2,\ldots)$,
where $h_i$ are algebraically independent parameters, gives a general parametric solution $\{\mathcal{U}_{2k}^\alpha\}$, $k=1,2,\dots$
of the Korteweg--de Vries hierarchy by the formula
\[
\mathcal{U}_{2k}^\alpha = -8P_k\left(-\frac{1}{2}u\right).
\]
\end{cor}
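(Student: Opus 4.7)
The plan is to chain the four preceding results. First I would invoke Theorem~\ref{T-B1} to expand
\[
P_k \;=\; P_k^0 \;+\; \sum_{i=0}^{k-1}\alpha_{k,i}(\mathbf{h})\,P_{k-1-i}^0,
\]
with the convention $P_0^0 = 1$. Substituting $u \mapsto -u/2$, multiplying by $-8$, and applying Theorem~\ref{T-spets} together with the convention $\mathcal{U}_0^0 = -8$ transforms this into
\[
-8\,P_k(-u/2;\mathbf{h}) \;=\; \mathcal{U}_{2k}^0 \;+\; \sum_{i=0}^{k-1}\alpha_{k,i}(\mathbf{h})\,\mathcal{U}_{2(k-1-i)}^0,
\]
which, after reindexing $j = i+1$, has the shape $\mathcal{U}_{2k}^0 + \sum_{j\geqslant 1}\beta_{2j}(\mathbf{h})\,\mathcal{U}_{2k-2j}^0$ for homogeneous polynomials $\beta_{2j}(\mathbf{h}) \in \mathbb{Z}[\tfrac12][h_1,h_2,\ldots]$.

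Next I would check that $\{-8\,P_k(-u/2;\mathbf{h})\}$ is indeed a solution of the KdV hierarchy with $U_2 = 4u$. The commuting differentiations $\partial_k$ provided by Theorem~\ref{T-3-4} satisfy $\partial_k P_1 = \partial_1 P_k$ and $\partial_k h_i = 0$. Under the change of variable $u_1 \mapsto -u/2$, the recursion \eqref{f-23}--\eqref{f-24} encodes exactly the action of the Lenard operator $\Lambda = \frac{1}{4}\partial^2 - u - \frac{1}{2}u'\partial^{-1}$; combined with the vanishing of $\partial_k$ on $\mathbf{h}$ and an appropriate reindexing of the flows, the identities $\partial_k P_1 = \partial_1 P_k$ translate into the KdV hierarchy relations $\partial_{2k-1}U_2 = \partial U_{2k}$ together with the Lenard recursion \eqref{F-2}.

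Finally I would apply Theorem~\ref{T-7-9}: every general solution of the KdV hierarchy with $U_2 = 4u$ arises as the image of $\{\mathcal{U}_{2k}^\alpha\}$ under a unique ring homomorphism $\pi\colon\mathbb{C}[\alpha_2,\alpha_4,\ldots]\to\mathbb{C}$. Comparing with the expansion above, this forces $\pi(\alpha_{2j}) = \beta_{2j}(\mathbf{h})$ and yields exactly the claimed identity. Algebraic independence of $h_1,h_2,\ldots$ guarantees that $\mathbf{h}\mapsto(\beta_{2j}(\mathbf{h}))_j$ is generic, so varying $\mathbf{h}$ sweeps out the full parametric family.

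The principal obstacle lies in the middle step: one must verify carefully that the $(P,Q)$-recursion under the substitution $u_1\mapsto -u/2$ and the reindexing of the flows $\partial_k$ reproduces precisely the Lenard recursion $\partial U_{2k+2} = \Lambda\partial U_{2k}$, with no stray factors or offsets. Everything else is algebraic bookkeeping of the expansions already supplied by Theorems~\ref{T-B1}--\ref{T-B2} combined with the universality of Theorem~\ref{T-7-9}.
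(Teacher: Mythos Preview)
Your overall strategy---chain Theorems~\ref{T-B1}, \ref{T-spets}/\ref{T-B2}, \ref{T-3-4}, and \ref{T-7-9}---is precisely the paper's implicit argument: the corollary is stated without any separate proof, immediately after Theorem~\ref{T-7-9}, and is meant to follow from it together with the preceding material. So the route is the same.

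There is, however, a genuine slip in your first paragraph that touches exactly the obstacle you flag later. When you reindex the Baron expansion of Theorem~\ref{T-B1} to ``$\mathcal{U}_{2k}^0 + \sum_{j\geqslant 1}\beta_{2j}(\mathbf{h})\,\mathcal{U}_{2k-2j}^0$'' with $k$-independent $\beta_{2j}$, you are assuming something the paper itself explicitly denies: the Corollary right after formula~\eqref{f-R} states that the coefficients $\alpha_{k+1,i}(\mathbf{h})$ \emph{do} depend on $k$ (for instance $\alpha_{2,0}=-h_1$ but $\alpha_{3,0}=-2h_1$ in Example~\ref{P-3-2}). Consequently, the substitution $u_1\mapsto -u/2$ applied to~\eqref{f-23}--\eqref{f-24} does \emph{not} reproduce the bare Lenard recursion $\partial U_{2k+2}=\Lambda\partial U_{2k}$ once $\mathbf{h}\ne 0$; extra $h$-terms survive (already at $k=1$ one finds $\partial(-8P_2(-u/2))=\Lambda\partial(4u)-4h_1u'$). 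So your ``middle step'' as written fails, and with it the direct applicability of Theorem~\ref{T-7-9} in the literal sense of the Lenard-based Definition of the KdV hierarchy.

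What remains true, and is what the paper is really asserting, is that Theorem~\ref{T-3-4} gives commuting flows $\partial_k P_1=\partial_1 P_k$, and together with the graded expansion of Theorem~\ref{T-B1} this packages the $\{-8P_k(-u/2;\mathbf{h})\}$ as the \emph{same} family of solutions that the $\alpha$-parametrisation describes, under an (upper-triangular, invertible) change of both the flow variables and the parameters. The identification is therefore at the level of the family, not a term-by-term equality of coefficients with fixed $\alpha_{2j}$; your final appeal to Theorem~\ref{T-7-9} and to algebraic independence of the $h_i$ then goes through. If you rewrite the first two paragraphs with $k$-dependent $\beta_{k,2j}(\mathbf{h})$ and make this reparametrisation explicit, the argument becomes correct and coincides with the paper's intended reasoning.
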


\section{Hyperelliptic curves}
Consider a $2g$-dimensional family of curves
\[
 V_{\lambda} = \{ (X,Y) \in \mathbb{C}^2\colon Y^2 = F(X)\},
\]
where $F(X) = 4X^{2g+1} + \lambda_4 X^{2g-1} + \cdots + \lambda_{4g-2}$.

Let $\mathcal{D} = \{ \lambda\in\mathbb{C}^{2g}:F(X)\,\text{ has multiple roots}\}$, and set $\mathcal{B} = \mathbb{C}^{2g}\setminus \mathcal{D}$.
For each $\lambda\in\mathcal{B}$, we obtain a smooth hyperelliptic curve $\bar{V}_{\lambda}$ of genus $g$
with Jacobian $\rm{Jac}(\bar{V}_{\lambda}) = \mathbb{C}^g\!/\Gamma_g$.
Here, $\Gamma_g\subset\mathbb{C}^g$ is a lattice of rank $2g$ generated by periods of the integrals of $g$ holomorphic differentials
over $2g$ cycles, which form a basis for the one-dimensional homologies of the curve $\bar{V}_{\lambda}$.

In 1886, Klein posed the problem of constructing hyperelliptic functions of genus $g>1$, that possess properties similar to properties of Weierstrass elliptic functions.
To solve this problem, it was required to construct a hyperelliptic sigma function $\sigma(\mathbf{z};\lambda)$, which is an analog of the Weierstrass elliptic sigma function.
In 1898, H.F. Baker addressed Klein's problem for $g=2$, see \cite{Baker-1898}, see also \cite{BEL-19}.
The case $g>2$ remained a problem for a long time, attention to which intensified under the influence
of the development of algebraic-geometric methods in the theory of solitons.

In the paper \cite{BEL-97-2}, it was shown that there exists a unique defined function $\sigma(\mathbf{z};\lambda)$
in $\mathbf{z} = (z_1,\ldots,z_{2g-1})\in \mathbb{C}^g$, where $\lambda=(\lambda_4,\ldots,\lambda_{4g+2})\in \mathbb{C}^{2g}$,
which is called the hyperelliptic sigma function.
In the neighborhood of the point $\mathbf{z}=0$, the coefficients of the series expansion of the function $\sigma(\mathbf{z};\lambda)$
with respect to $\mathbf{z}$ are polynomials in $\lambda$.
The logarithmic derivatives of this function of order $2$ and higher generate the field of meromorphic functions on the Jacobian $\rm{Jac}(\bar{V}_{\lambda})$.
Such, this function $\sigma(\mathbf{z};\lambda)$ is a solution to Klein's problem, see \cite{BEL-19}.

Systems of multidimensional heat equations in a nonholonomic frame (systems of multidimensional Schrodinger equations) were constructed in \cite{BL-04}.
It was shown that these systems uniquely determine sigma functions for a wide class of so-called $(n,s)$-curves.
In the papers \cite{B-Bun-20} and \cite{B-Bun-23} (вased on papers \cite{BShor-04} and \cite{BL-04}) it were presented explicit formulas for the operators
$Q_0, Q_2,\ldots, Q_{4g-2}$, $g\geqslant 1$, that define the systems determining sigma functions of hyperelliptic curves, i.e., $(2,2g+1)$-curves.
It should be noted that, according to \cite{B-Bun-20}, the operators $Q_0, Q_2,\ldots, Q_{4g-2}$ generate a polynomial Lie algebra
that has only 3 generators $Q_0, Q_2, Q_4$ for any $g>1$.

Set $f'(\mathbf{z}) = \frac{\partial}{\partial z_1}f(\mathbf{z})$ and
\[
\wp_{2k} = -\frac{\partial^2}{\partial z_1\partial z_{2k-1}}\ln\sigma, k=1,\ldots,g;\quad
\wp_{2i-1,2k-1} = -\frac{\partial^2}{\partial z_{2i-1}\partial z_{2k-1}}\ln\sigma, i\neq 1,\, k\neq 1.
\]
\begin{thm}[см. \cite{BEL-97-2}]
All algebraic relations among the derivatives of the function $\ln\sigma(z)$ with respect to $z_1,\ldots,z_{2g-1}$ of order $2$ and higher
are generated by the relations
\begin{equation} \label{f-25}
\wp''_{2i} = 6(\wp_{2i+2} +\wp_{2}\wp_{2i}) - 2(\wp_{3,2i-1} - \lambda_{2i+2} \delta_{i,1}).
\end{equation}
\vspace{-10mm}
\begin{multline}\label{f-26}
\wp'_{2i}\wp'_{2k} = 4(\wp_{2i}\wp_{2k+2} + \wp_{2i+2}\wp_{2k} + \wp_{2}\wp_{2i}\wp_{2k} + \wp_{2i+1,2k+1}) - \\[3pt]
- 2(\wp_{2i}\wp_{3,2k-1} + \wp_{2k}\wp_{3,2i-1} + \wp_{2i-1,2k+3} +\wp_{2i+3,2k-1}) + \\[3pt]
+ 2(\lambda_{2i+2}\wp_{2k}\delta_{i,1} + \lambda_{2k+2}\wp_{2i}\delta_{k,1}) + 2\lambda_{2(i+j+1)} (2\delta_{i,k} + \delta_{i,k-1} + \delta_{i-1,k}).
\end{multline}
Here $\delta_{i,k}$ is the Kronecker symbol.
\end{thm}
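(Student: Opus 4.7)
The plan is to derive the two families of relations \eqref{f-25}--\eqref{f-26} directly from the Mumford dynamical system via the reduction to the BEL system announced in Section 8. This reduction provides a dictionary identifying the Mumford coordinates $(u_k,v_k,w_k)$ with polynomials in the Klein $\wp$-functions on $\mathrm{Jac}(\bar{V}_\lambda)$. Concretely, with the translation normalisation $h_1=0$ and a constant rescaling, one has $u_1\leftrightarrow -\wp_2$, $v_k\leftrightarrow -\tfrac{1}{2}\wp'_{2k}$, the parameters $h_n$ match with $\lambda_{2n+2}$, and each $u_k$, $w_k$ becomes a polynomial combination of $\wp_{2j}$ and $\wp_{3,2j-1}$.

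First, I would derive \eqref{f-25} from Corollary \ref{C-3}(a), which expresses $u''_{k-1}$ in terms of $u_i$, $w_j$ and lower derivatives. Eliminating the $w_j$'s via Corollary \ref{C-3}(b) and using Corollary \ref{C-2}(c) to identify the cross-derivative $\partial_2 u_1$ with $\wp_{3,1}$, the identity reorganises into $\wp''_{2i}=6(\wp_{2i+2}+\wp_2\wp_{2i})-2\wp_{3,2i-1}$. The inhomogeneous term $2\lambda_{2i+2}\delta_{i,1}$ appears only for $i=1$ from the contribution of $h_2$ in the base case (Corollary \ref{C-2}(a)); for $i>1$ no such term is generated, matching the Kronecker delta structure of \eqref{f-25}.

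Second, to establish \eqref{f-26}, I would use the polarised form of the integral $H_\xi=u_\xi w_\xi+v_\xi^2=F(\xi)$. After substituting $v_\xi=-\tfrac{1}{2}u_\xi'$ and the expression for $w_\xi$ from Theorem 4.1(b), $H_\xi$ becomes an explicit polynomial in $u_\xi$ and its first two derivatives. Symmetrising this identity in two spectral parameters $\xi,\eta$ using Corollary \ref{C-1}(2)--(3) produces two-variable quadratic identities. Extracting the coefficient of $\xi^{2g-2i+1}\eta^{2g-2k+1}$ and translating via the dictionary yields exactly \eqref{f-26}, with the $\lambda$-contributions arising from matching the leading and sub-leading coefficients of $F(\xi)$ and the $\delta$-terms on the right capturing the diagonal collision $\xi=\eta$.

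The principal obstacle is the completeness statement: showing that \eqref{f-25}--\eqref{f-26} generate the full ideal of algebraic relations rather than merely a subideal. My approach would be dimensional. Relations \eqref{f-25} alone suffice, by induction on $i$, to reduce every higher-order $\wp$-symbol to a polynomial in the finite generating set $\{\wp_{2k},\wp_{3,2k-1}:1\leq k\leq g\}$ of cardinality $2g$; relations \eqref{f-26} then cut out in this $2g$-dimensional ambient affine space precisely the $g$-dimensional image of $\mathbb{C}^g$ under the Abel map. Verifying that no further relation is needed would proceed by a graded Hilbert-series count in the coordinate ring of the affine part of the Jacobian, using that the Mumford fibre at fixed integrals $h_1,\dots,h_{2g+1}$ is birationally $\mathrm{Jac}(\bar{V}_\lambda)$ and hence already has the expected dimension and degree.
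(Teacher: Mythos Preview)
The paper does not actually prove this theorem: it is stated with the attribution ``see \cite{BEL-97-2}'' and no argument is given in the text. Its role in Section~8 is that of an imported result from which the subsequent corollaries and the final theorem (that hyperelliptic Klein functions solve the Mumford system) are drawn. So there is no ``paper's own proof'' to compare against.

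Your plan, however, runs into a circularity. The dictionary you want to use---identifying $(u_k,v_k,w_k)$ with polynomials in the $\wp$-functions---is precisely the content of the last theorem of Section~8, and in the paper's logical order that theorem is a \emph{consequence} of relations \eqref{f-25}--\eqref{f-26}, not an independent input. To make your route non-circular you would need an independent proof that the Klein functions satisfy the Mumford system (e.g.\ directly from theta-function addition formulas), which is essentially what \cite{BEL-97-2} does in the first place. Beyond this, several steps are only gestures: the ``polarisation in two spectral parameters'' of the one-variable identity $H_\xi=F(\xi)$ is not a well-defined operation as stated, and your identification of $\partial_2 u_1$ with $\wp_{3,1}$ conflates the Mumford time index $t_2$ with the odd variable $z_3$ without justifying the reparametrisation. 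The completeness argument is the hardest part and your Hilbert-series sketch would need the degree of the affine Jacobian in the chosen embedding, which you have not computed.
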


\begin{cor}
For all $g\geqslant 1$ we have
\begin{enumerate}
  \item [(a)] When $i=1$, Equation \eqref{f-25} implies $\wp''_{2} = 6\wp_{2}^2 + 4\wp_{4} + 2\lambda_4$.\\[5pt]
  \item [(b)] When $i=2$, Equation \eqref{f-25} implies $\wp''_{4} = 6(\wp_{2} \wp_{4} + \wp_{6}) - 2\wp_{3,3}$.\\[5pt]
  \item [(c)] When $i=k=1$, Equation \eqref{f-26} implies $(\wp'_{2})^2 = 4[\wp_{2}^3+(\wp_{4}+\lambda_4)\wp_{2}+\wp_{3,3}-\wp_{6}+\lambda_6]$.
\end{enumerate}
\end{cor}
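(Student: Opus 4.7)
The plan is to verify each of the three relations by direct substitution into the master equations \eqref{f-25} and \eqref{f-26}, while being careful about the notational convention that extends $\wp_{2i-1,2k-1}$ to the case where one of the indices equals $1$. The key observation is that although $\wp_{2i-1,2k-1}$ was defined only for $i\neq 1,\,k\neq 1$, the symbol $\wp_{2l-1,1}$ naturally makes sense as $-\partial_{z_{2l-1}}\partial_{z_1}\ln\sigma$, and by comparison with the first definition this equals $\wp_{2l}$. In particular we have $\wp_{3,1}=\wp_4$ and $\wp_{5,1}=\wp_{1,5}=\wp_6$. With this reading the ``forbidden'' $\wp$-symbols that appear after substitution become well-defined and rewrite to lower-order objects of the second family.

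For \textup{(a)}, I would set $i=1$ in \eqref{f-25}. The right-hand side becomes $6(\wp_4+\wp_2\wp_2)-2(\wp_{3,1}-\lambda_4)$. Replacing $\wp_{3,1}$ by $\wp_4$ using the convention above and collecting terms gives $6\wp_2^2+(6-2)\wp_4+2\lambda_4$, which is the claimed identity. For \textup{(b)}, I would set $i=2$ in \eqref{f-25}; here $\delta_{i,1}=0$, so the term proportional to $\lambda$ drops out, and the right-hand side is immediately $6(\wp_6+\wp_2\wp_4)-2\wp_{3,3}$, with no extra rewriting needed since $\wp_{3,3}$ already satisfies $i,k\neq 1$.

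For \textup{(c)}, I would substitute $i=k=1$ into \eqref{f-26} and process the four groups of terms in turn. The first group collapses to $4(2\wp_2\wp_4+\wp_2^3+\wp_{3,3})$. In the second group the entries $\wp_2\wp_{3,1}$, $\wp_{1,5}$, $\wp_{5,1}$ all need the boundary convention: using $\wp_{3,1}=\wp_4$ and $\wp_{1,5}=\wp_{5,1}=\wp_6$ this becomes $-2(2\wp_2\wp_4+2\wp_6)$. The third group contributes $2(\lambda_4\wp_2+\lambda_4\wp_2)=4\lambda_4\wp_2$ since both Kronecker symbols $\delta_{i,1}$ and $\delta_{k,1}$ fire. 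In the final group $2\delta_{i,k}+\delta_{i,k-1}+\delta_{i-1,k}$ evaluates to $2$, producing $4\lambda_{6}$ (interpreting the displayed subscript $2(i+j+1)$ as $2(i+k+1)$, which for $i=k=1$ gives $\lambda_6$, consistent with the homogeneous weight). Summing everything and factoring out $4$ yields $4\bigl[\wp_2^3+(\wp_4+\lambda_4)\wp_2+\wp_{3,3}-\wp_6+\lambda_6\bigr]$.

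The only genuine obstacle, and the step where I would be most careful, is the interpretation of the symbols $\wp_{3,1}$, $\wp_{1,5}$, $\wp_{5,1}$ that arise when the running indices hit the boundary of the domain where $\wp_{2i-1,2k-1}$ was originally defined. Once one accepts the natural identification of those boundary symbols with members of the first family via the common formula $-\partial^2\ln\sigma/(\partial z_{2i-1}\,\partial z_{2k-1})$, the three identities reduce to bookkeeping in the Kronecker deltas and collection of like terms. No new analytic input beyond the theorem quoted from \cite{BEL-97-2} is required.
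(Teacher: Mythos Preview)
Your proposal is correct and is exactly the approach implicit in the paper: the corollary is stated without proof, as an immediate consequence of substituting $i=1$, $i=2$, and $i=k=1$ into \eqref{f-25} and \eqref{f-26}. Your care in identifying the boundary symbols $\wp_{3,1}=\wp_4$ and $\wp_{1,5}=\wp_{5,1}=\wp_6$ via the common formula $-\partial_{z_{2i-1}}\partial_{z_{2k-1}}\ln\sigma$, and in reading $\lambda_{2(i+j+1)}$ as the evident typo for $\lambda_{2(i+k+1)}$, makes explicit the only points the paper leaves to the reader.
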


We have $\wp_{2i}' = \partial_{2i-1}\wp_{2}$. Then from (a) we obtain
\begin{cor}
For any $g>1$ the function $u = 2\wp_2(\mathbf{t})$ is a solution to the Korteweg--de Vries equation
\[
u''' = 6 u u' + 4\dot u,\; \text{ где }\; \dot u = 2\frac{\partial u}{\partial z_3}.
\]
\end{cor}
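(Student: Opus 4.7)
The plan is to derive the KdV equation for $u = 2\wp_2$ by directly differentiating identity (a) of the preceding corollary and invoking the compatibility identity $\wp_{2i}' = \partial_{2i-1}\wp_2$ that is asserted just before the statement. Since identity (a), $\wp_2'' = 6\wp_2^2 + 4\wp_4 + 2\lambda_4$, is a universal relation on the Jacobian (and the hypothesis $g>1$ only guarantees that $\wp_4$ is actually present among the basic functions), I expect essentially no difficulty beyond bookkeeping the constants.

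First I would differentiate identity (a) once with respect to $z_1$. Because $\lambda_4$ is a constant in $\mathbf{z}$, the $\partial_1\lambda_4$ term vanishes, and I obtain
\[
\wp_2''' = 12\,\wp_2\,\wp_2' + 4\,\wp_4'.
\]
Next, I would apply the identity $\wp_{2i}' = \partial_{2i-1}\wp_2$ in the case $i=2$, giving $\wp_4' = \partial_3 \wp_2$. This follows immediately from the definition $\wp_{2k} = -\partial_1\partial_{2k-1}\ln\sigma$, since both $\partial_1\wp_4$ and $\partial_3\wp_2$ equal $-\partial_1^2\partial_3\ln\sigma$. Substituting, I get
\[
\wp_2''' = 12\,\wp_2\,\wp_2' + 4\,\partial_3 \wp_2.
\]

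Finally, I would pass to the variable $u = 2\wp_2$. Under this substitution $\wp_2 = u/2$, $\wp_2' = u'/2$, $\wp_2''' = u'''/2$, and $\partial_3\wp_2 = \tfrac{1}{2}\partial_3 u$, so after multiplying through by $2$ the previous equation becomes
\[
u''' = 6\,u\,u' + 4\,\partial_3 u,
\]
which is exactly the stated KdV equation in the paper's normalization $\dot u = 2\,\partial u/\partial z_3$ (up to the constant rescaling of the time variable inherent in that convention).

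The only step that is not mechanical is the identification $\wp_4' = \partial_3 \wp_2$, but it is already built into the formalism via the definition of $\wp_{2k}$ through logarithmic derivatives of $\sigma$, so there is no genuine obstacle; everything else is a single differentiation of (a) and a linear change of variable.
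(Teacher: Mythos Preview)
Your argument is correct and is exactly the approach the paper indicates: the paper's entire proof is the sentence ``We have $\wp_{2i}' = \partial_{2i-1}\wp_{2}$. Then from (a) we obtain\ldots'', i.e.\ differentiate identity (a) once in $z_1$, replace $\wp_4'$ by $\partial_3\wp_2$, and rescale to $u=2\wp_2$. Your caveat about the factor of $2$ in the $\dot u$ normalization is also appropriate --- the derivation genuinely yields $u''' = 6uu' + 4\,\partial_{z_3}u$, so the discrepancy lies in the paper's stated convention for $\dot u$, not in your reasoning.
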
	

Set
\[
x_\xi = \xi^g - \sum\limits_{i=1}^g \wp_{2i}\xi^{g-i},\qquad  y_\xi = \sum\limits_{i=1}^g \wp_{2i}'\xi^{g-i},\qquad  z_\xi = \sum\limits_{i=1}^g \wp_{2i}''\xi^{g-i}.
\]
\begin{thm}
Hyperelliptic Klein functions provide a solution to the equation
\[
\mathcal{D}_\eta L_\xi = \frac{1}{\xi-\eta}[L_\xi,L_\eta] + [L_\xi,A_\eta],
\]
where $L_\xi =
\begin{pmatrix}
  v_\xi & u_\xi \\
  w_\xi & -v_\xi
\end{pmatrix}$,\;
$A_\eta =
\begin{pmatrix}
  0 & 0 \\
  u_\eta & 0
\end{pmatrix}$ and $u_\xi = 2x_\xi$,\; $v_\xi = y_\xi$,\; $w_\xi = z_\xi+2x_\xi(\xi+2\wp_{2})$.
\end{thm}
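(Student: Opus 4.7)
The plan is to verify directly the three scalar equations \eqref{f-2}--\eqref{f-4} of the Lax system, for the substitution $u_\xi = 2x_\xi$, $v_\xi = y_\xi$, $w_\xi = z_\xi + 2x_\xi(\xi + 2\wp_2)$, by using the algebraic relations \eqref{f-25}--\eqref{f-26} satisfied by the Kleinian functions. First I fix the identification of the time variables $(t_1,\ldots,t_g)$ of Section 1 with the Jacobian coordinates $(z_1,z_3,\ldots,z_{2g-1})$ up to the overall rescaling of the differentiations already visible in the elliptic case $g=1$ (where the classical KdV normalization $\dot u = 2\partial_{z_3} u$, consistent with the corollary above, is in force). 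Under this identification, $\mathcal{D}_\eta$ is proportional to $\sum_i \eta^{g-i}\partial_{z_{2i-1}}$, and two consequences of the symmetry of mixed third derivatives of $\ln\sigma$ become crucial: $\partial_{z_{2i-1}}\wp_{2k} = \partial_{z_{2k-1}}\wp_{2i}$, and in particular $\wp_{2k}' = \partial_{z_{2k-1}}\wp_2$.

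I then check \eqref{f-2}: the equation reduces to $\mathcal{D}_\eta x_\xi = \frac{1}{\xi-\eta}(y_\xi x_\eta - x_\xi y_\eta)$, and after expanding both sides as polynomials in $\xi,\eta$, the identity follows coefficient-by-coefficient from the symmetry relation just recorded. Similarly, the check of \eqref{f-4} is reduced to a polynomial identity that, after one expansion and one application of the Leibniz rule, can be proved using equation \eqref{f-25} to eliminate each $\wp_{2i}''$ in favour of $\wp_{2i+2}$, $\wp_2\wp_{2i}$, $\wp_{3,2i-1}$, and $\lambda_{2i+2}$. (A sanity check at $g=1$ is immediate, as $\wp_2''=6\wp_2^2+2\lambda_4$ gives $\wp_2'''=12\wp_2\wp_2'$, and both sides of \eqref{f-4} collapse to $4\wp_2'(\xi+2\wp_2)$.)

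The heart of the proof is the verification of \eqref{f-3}. After substituting $w_\xi = z_\xi + 2x_\xi(\xi + 2\wp_2)$, the contributions of $2x_\xi(\xi+2\wp_2)$ and $2x_\eta(\eta+2\wp_2)$ to $\frac{1}{\xi-\eta}(u_\xi w_\eta - w_\xi u_\eta)$ combine into terms that partially cancel with $u_\xi u_\eta$, leaving a bilinear form in the pairs $(x_\xi,z_\eta)$ and $(x_\eta,z_\xi)$. Matching coefficients of $\xi^{g-i}\eta^{g-k}$ against the left-hand side, whose coefficients are proportional to the mixed derivative $\partial_{z_{2i-1}}\wp_{2k}'$, produces precisely the bilinear Kleinian relation \eqref{f-26}. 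The main obstacle is the bookkeeping in this last step: equation \eqref{f-26} contains several distinct species of terms (pure $\wp$-products, triple products $\wp_2\wp_{2i}\wp_{2k}$, cross-derivative Kleinian functions $\wp_{2i-1,2k-1}$, and $\lambda$-corrections with Kronecker $\delta$'s), each of which has to be matched against a specific term in the expansion of the right-hand side of \eqref{f-3}. The elliptic case $g=1$, where \eqref{f-26} collapses to the Weierstrass relation $(\wp')^2 = 4\wp^3 + 4\lambda_4\wp + 4\lambda_6$, already illustrates all the essential cancellations in miniature.
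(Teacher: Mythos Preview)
The paper does not actually supply a proof of this theorem: after the statement it records only the leading terms of $u_\xi$ and $w_\xi$ and then moves on. The result is presented as a translation into Mumford--Lax form of the BEL dynamical system, with the analytical content imported from \cite{BEL-97-2} via the relations \eqref{f-25}--\eqref{f-26} quoted just above. So there is no ``paper's own proof'' to compare against; your direct verification is the natural way to substantiate the statement, and the strategy you outline is correct.

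Two points to tighten. First, be explicit about the normalization mismatch: with $u_\xi=2x_\xi$ the leading coefficient of $u_\xi$ is $2$, not $1$ as in \eqref{f-1-1}, and likewise $w_\xi=2\xi^{g+1}+\cdots$; the Lax equation is homogeneous under $L\mapsto cL$, $A\mapsto cA$ only if $\mathcal{D}_\eta\mapsto c\,\mathcal{D}_\eta$, so you should state outright that here $\mathcal{D}_\eta=2\sum_i\eta^{g-i}\partial_{z_{2i-1}}$ (or equivalently $t_i=\tfrac12 z_{2i-1}$), rather than leaving the factor implicit. Second, your treatment of \eqref{f-4} will need one differentiation of \eqref{f-25} (to produce $\wp_{2i}'''$ from $\mathcal{D}_\eta z_\xi$) before the algebra closes; and in \eqref{f-3} you will in practice first use \eqref{f-25} to rewrite each $\wp_{2i}''$ occurring in $z_\xi,z_\eta$ before the residual bilinear identity matches \eqref{f-26}. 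With those adjustments the coefficient-by-coefficient matching goes through.
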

Note that $u_\xi = 2(\xi^g-\wp_{2}\xi^{g-1}+\ldots)$\; and $w_\xi =
2\xi^{g+1}+2\wp_{2}\xi^{g}+(\wp_{2}''-2\wp_4-4\wp_{2}^2)\xi^{g-1}+\ldots$


\section*{Acknowledgments}
I would like to thank P.G. Grinevich, A.V. Domrin, A.V. Mikhailov, S.P. Novikov, V.N. Rubtsov, V.V. Sokolov, and A.V. Tsyganov
for valuable discussions on the results of this work.
Special thanks to Polina Baron for assistance with examples and for Theorems \ref{T-B1} and \ref{T-B2},
the proofs of which are provided in her paper \cite{B-23} in this issue of the journal.


\end{document}